
\documentclass[conference,letterpaper]{IEEEtran}

\addtolength{\topmargin}{9mm}

%
%
\usepackage[utf8]{inputenc} 
\usepackage[T1]{fontenc}
\usepackage{url}
\usepackage{ifthen}
\usepackage{cite}
\usepackage[cmex10]{amsmath} 
\usepackage{amsfonts} 
\usepackage{tabularx}
\usepackage{longtable}
\usepackage{arydshln} 
\usepackage{amssymb} 
\usepackage{amsthm} 
\usepackage{mathtools} 
\usepackage{xcolor} 

\DeclarePairedDelimiter{\ceil}{\lceil}{\rceil}


\newtheorem{Theorem}{Theorem}
\newtheorem{Lemma}{Lemma}
\newtheorem{Remark}{Remark}
\newtheorem{Example}{Example}

\newtheorem{Proposition}{Proposition}

\interdisplaylinepenalty=2500 

\hyphenation{op-tical net-works semi-conduc-tor}

\begin{document}
\title{On Decentralized Linearly Separable Computation With the Minimum Computation Cost}  


\author{
	\IEEEauthorblockN{%
		Haoning Chen\IEEEauthorrefmark{1},
 Minquan Cheng\IEEEauthorrefmark{2}, Zhenhao Huang\IEEEauthorrefmark{1},
and 		Youlong Wu\IEEEauthorrefmark{1}
	}
	  \IEEEauthorblockA{\IEEEauthorrefmark{1} ShanghaiTech University, Shanghai, China, \\  \{chenhn, huangzhh, wuyl1\}@shanghaitech.edu.cn }
	  \IEEEauthorblockA{\IEEEauthorrefmark{2} Guangxi Normal University, Guilin, China, mqcheng@gxnu.edu.cn}
}


\maketitle


\begin{abstract}
The distributed linearly separable computation problem finds extensive applications across domains such as distributed gradient coding, distributed linear transform, real-time rendering, etc. In this paper, we investigate this problem in a fully decentralized scenario, where $\mathsf{N}$ workers collaboratively perform the computation task without a central master. Each worker aims to compute a linearly separable computation that can be manifested as $\mathsf{K}_{\mathrm{c}}$ linear combinations of $\mathsf{K}$ messages, where each message is a function of a distinct dataset. We require that each worker successfully fulfill the task based on the transmissions from any $\mathsf{N}_{\mathrm{r}}$ workers, such that the system can tolerate any $\mathsf{N}-\mathsf{N}_{\mathrm{r}}$ stragglers. We focus on the scenario where the computation cost (the number of uncoded datasets assigned to each worker) is minimum, and aim to minimize the communication cost (the number of symbols the fastest $\mathsf{N}_{\mathrm{r}}$ workers transmit). We propose a novel distributed computing scheme that is optimal under the widely used cyclic data assignment. Interestingly, we demonstrate that the side information at each worker is ineffective in reducing the communication cost when $\mathsf{K}_{\mathrm{c}}\leq  {\mathsf{K}}\mathsf{N}_{\mathrm{r}}/{\mathsf{N}}$, while it helps reduce the communication cost as $\mathsf{K}_{\mathrm{c}}$ increases.    
\end{abstract} 

\begin{IEEEkeywords}
Decentralized computation, linearly separable function, cyclic assignment 
\end{IEEEkeywords}

\section{Introduction} 
\label{Sec:Introduction} 
Recently distributed computing has garnered substantial attention \cite{AWS, Azure, Cloud}, due to its capacity to concurrently process intricate computational tasks across numerous nodes, thereby accelerating the overall computation speed. Nevertheless, the efficacy of distributed computing is adversely affected by challenges stemming from both limited communication bandwidth and the presence of straggling workers \cite{surveyCDC}. Coding techniques were initially utilized to address the aforementioned two problems, specifically in reducing the communication cost \cite{CDC} and mitigating the impact of stragglers \cite{Lee'18}.


Distributed linearly separable computation is a specific distributed computing framework widely studied over the canonical centralized, single-master coded computing system \cite{m=1, m>1TCOM, m>1, secure}, where 
a master node aims to compute a function of $\mathsf{K}$ datasets, represented as $\mathsf{K}_{\mathrm{c}}$ linear combinations of $\mathsf{K}$ messages, where each message corresponds to an individual function of a distinct dataset. Such a computation task structure encompasses various practical applications, including but not limited to distributed gradient descent \cite{GC, Ye'18, Raviv'18, Halbawi'18, Xu'21, AdaptiveGC}, distributed linear transform \cite{Short-Dot}, real-time rendering \cite{Moller'19}, etc. 

Nevertheless, as articulated in \cite{Maddah-Ali'23}, the master would be a bottleneck for scalability in a distributed computing system.  A substantial increase in the number of workers may lead to communication congestion at the master, given that each worker is required to communicate with it in every iteration \cite{Lian'17}. To overcome this, decentralized distributed computations where computing nodes exchange information in a decentralized fashion without the help of a master node have been widely studied \cite{CDC, Jeong'20}. Unfortunately, 
the existing coded computing schemes are not designed delicately for linearly separable computation. They are either infeasible in the linearly separable computation problem, or unable to fully exploit the linear algebra property to minimize the communication costs.  



It is worth noting the centralized schemes in \cite{m=1, m>1TCOM, m>1} could be easily extended to the decentralized scenarios. This can be achieved by simply letting  
 each worker transmit the coded messages, which were originally intended for the master in \cite{m=1, m>1TCOM, m>1}, to the other workers, and computing the $\mathsf{K}_{\mathrm{c}}$ linear combinations at the workers. However, this approach is highly sub-optimal as the master in \cite{m=1, m>1TCOM, m>1} does compute any message locally. In contrast, in our considered problem, each worker can compute some messages locally before receiving the transmissions from other workers. Consequently, directly applying the scheme in \cite{m=1, m>1TCOM, m>1} fails to leverage the side information from each worker, resulting in unnecessary communication costs. 





Motivated by the facts above, in this paper we study the decentralized linearly separable computation problem in a fully decentralized scenario, where $\mathsf{N}$ workers connect with each other through a shared and noiseless multicast link,  with the presence of $\mathsf{N}-\mathsf{N}_{\mathrm{r}}$ stragglers. Each worker wishes to compute a  linearly separable computation that can be expressed as $\mathsf{K}_{\mathrm{c}}$ linear combinations of $\mathsf{K}$ messages, with each message being generated from a distinct dataset.  To perform such  a linearly separable computation task, the $\mathsf{N}$ workers first are assigned  $\mathsf{K}$ datasets, and then compute messages from the assigned dataset and exchange information with each other through the shared link. Finally, each worker recovers the $\mathsf{K}_{\mathrm{c}}$ desired linear combinations from any $\mathsf{N}_{\mathrm{r}}$ responding workers, such that the system can tolerate $\mathsf{N}-\mathsf{N}_{\mathrm{r}}$ stragglers.   Our goal is to find the optimal communication cost $\mathsf{R}$ (the normalized number of symbols transmitted by the responding workers) under the cyclic assignment\footnote{The cyclic assignment is a simple data assignment strategy widely used in related works studying the distributed linearly separable problem. It is unlimited by system parameters and independent of the specific task function. }, given the minimum computation cost (the number of uncoded datasets assigned to each worker).  The main contributions of this paper are summarized as follows. 

\begin{itemize} 
\item We propose a novel distributed computing scheme for the considered decentralized linearly separable computation problem, by leveraging the side information at each worker to minimize the communication cost. In particular, based on the intersection of the linear spaces of the computed messages and the demanded linear combinations, each worker sends the \emph{minimum} number of linear   combinations of messages uniformly at random such that all the workers can decode   the $\mathsf{K}_{\mathrm{c}}$ desired linear combinations based on the local   and delivered messages. 


\item Compared with the benchmark scheme \cite{m=1}, our proposed scheme achieves a strictly smaller communication cost when the number of linear combinations is relatively large, i.e., $\mathsf{K}_{\mathrm{c}}>  {\mathsf{K}}\mathsf{N}_{\mathrm{r}}/{\mathsf{N}}$. This improvement mainly comes from our scheme's adept utilization of locally computed messages from each worker, thereby minimizing communication overhead. 

\item We analyze the converse bound for our considered problem and establish the optimality of our proposed scheme under the cyclic data assignment. In other words, when $\mathsf{K}_{\mathrm{c}}>  {\mathsf{K}}\mathsf{N}_{\mathrm{r}}/{\mathsf{N}}$ and under the cyclic assignment,  our scheme is optimal while the benchmark scheme \cite{m=1} is not. Surprisingly, we prove that when $\mathsf{K}_{\mathrm{c}}\leq  {\mathsf{K}}\mathsf{N}_{\mathrm{r}}/{\mathsf{N}}$, the benchmark scheme \cite{m=1} is still optimal, indicating the side information at workers is useless in reducing the communication cost when $\mathsf{K}_{\mathrm{c}}$ is relatively small compared to $\mathsf{N}_{\mathrm{r}}$.  


\end{itemize}

\;\;\emph{Notations:} 
Define $\left[x:y\right]:=\left\{x, x+1, \ldots, y\right\}$, $\left[x:y\right):=\left\{x, x+1, \ldots, y-1\right\}$, $\left(x:y\right]:=\left\{x+1, x+2, \ldots, y\right\}$, $\left(x:y\right):=\left\{x+1, x+2, \ldots, y-1\right\}$, $\left[n\right]:=\left\{1, 2, \ldots, n\right\}$. Denote $\lvert\cdot\rvert$ as the cardinality of a set, and let $\mathcal{A}\setminus\mathcal{B}:=\left\{x\in\mathcal{A}\mid x\not\in\mathcal{B}\right\}$. $\mathbb{F}_{\mathsf{q}}$ represents a finite field with order $\mathsf{q}$. For a matrix $\mathbf{A}$, $\mathbf{A}^T$ and $\mathbf{A}^{-1}$ represent its transpose and inverse, respectively; $\mathcal{N}\left(\mathbf{A}\right)$ represents the null space of $\mathbf{A}$, and $\mathcal{C}\left(\mathbf{A}\right)$ represents the column space of $\mathbf{A}$. $\mathrm{span}\left\{\mathbf{a}_1, \ldots, \mathbf{a}_n\right\}$ represents the linear span of the vectors $\mathbf{a}_1, \ldots, \mathbf{a}_n$, and $\dim\mathcal{V}$ represents the dimension of the subspace $\mathcal{V}$. $\text{Mod}\left(x, y\right)\in\left\{1, \ldots, y\right\}$ represents the remainder after dividing $x$ by $y$, where we let $\text{Mod}\left(x, y\right)=y$ if $y$ divides $x$. 

\section{System Model and Problem Formulation} 
\label{Sec:Model}
We formulate a $(\mathsf{K}, \mathsf{N}, \mathsf{N}_{\mathrm{r}}, \mathsf{K}_{\mathrm{c}}, \mathsf{M})$ distributed linearly separable computation problem over a fully decentralized network, where $\mathsf{N}$ workers connect with each other  through a shared, noiseless device-to-device (D2D)  link with the presence of $\mathsf{N}-\mathsf{N}_{\mathrm{r}}$ stragglers, as depicted in Fig. 1. Each worker wishes to compute a function of $\mathsf{K}$ statistically independent datasets $D_1, \ldots, D_{\mathsf{K}}$, which is assumed to be linearly  separable from the datasets and can be written as $\mathsf{K}_{\mathrm{c}}\leq \mathsf{K}$ linear combinations of $\mathsf{K}$ messages, i.e., 
\begin{IEEEeqnarray}{rCl}
&&f\left(D_1, \ldots, D_{\mathsf{K}}\right)=g\left(f_1\left(D_1\right), \ldots, f_{\mathsf{K}}\left(D_{\mathsf{K}}\right)\right) \nonumber 
\\&&=g\left(W_1, \ldots, W_{\mathsf{K}}\right)=\mathbf{F}\left[W_1; \ldots; W_{\mathsf{K}}\right]=\left[F_1; \ldots; F_{\mathsf{K}_\text{c}}\right],   
\end{IEEEeqnarray}
where the $k$-th message $W_k=f_k\left(D_k\right)$, $k\in\left[\mathsf{K}\right]$ is generated from the (generally non-linear and computationally hard) sub-function $f_k\left(\cdot\right)$ taking $D_k$ as input, and $F_i$, $i\in\left[\mathsf{K}_{\mathrm{c}}\right]$ is the $i$-th row of $\mathbf{F}\left[W_1; \ldots; W_{\mathsf{K}}\right]$. As in \cite{m=1, m>1TCOM, m>1, secure}, Each of the $\mathsf{K}$ messages is assumed to be uniformly i.i.d. over $\mathbb{F}_{\mathsf{q}}^{1\times\mathsf{L}}$, for some sufficiently large $\mathsf{q}$, where $\mathsf{L}$ is assumed to be sufficiently large such that any sub-message division is possible. 
$\mathbf{F}\in\mathbb{F}_{\mathsf{q}}^{\mathsf{K}_{\mathrm{c}}\times\mathsf{K}}$ represents the demand matrix, with its elements uniformly i.i.d. over $\mathbb{F}_{\mathsf{q}}$. In this paper, we assume that $\frac{\mathsf{K}}{\mathsf{N}}$ is an integer.\footnote{If $\frac{\mathsf{K}}{\mathsf{N}}$ is not an integer, we could inject $\ceil[\Big]{\frac{\mathsf{K}}{\mathsf{N}}}\mathsf{N}-\mathsf{K}$ empty datasets into the system as in \cite{m=1}. } 

\begin{figure} 
    \centering
    \label{fig:sysmod}
    \includegraphics[width=3.3in,scale=0.35]{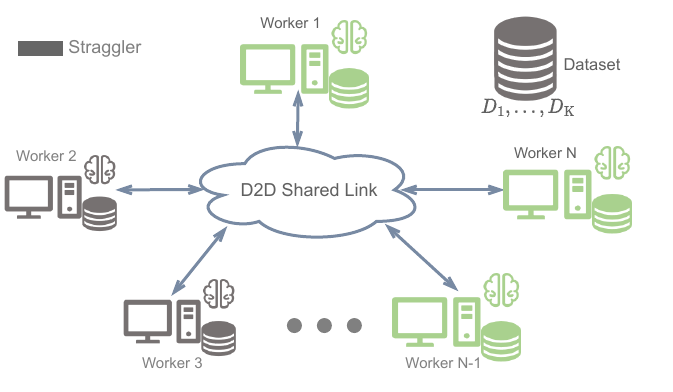}   
    \caption{{{The considered decentralized computing system.}}}
\end{figure}

The distributed computing framework is divided into the following three phases. 

\subsubsection{Data Assignment Phase} In this phase, the datasets are assigned to the workers without priorly knowing the demand matrix and stragglers' identities. 
We denote $\mathcal{Z}_n\in\left[\mathsf{K}\right]$ as the index set of datasets assigned to worker $n\in\left[\mathsf{N}\right]$, satisfying $\lvert\mathcal{Z}_n\rvert=\mathsf{M}$. Since the computation overhead of separable functions $f_1, \ldots, f_{\textsf{K}}$ is generally much higher than than the linear combinations of messages, we follow the same definition as in  \cite{m=1, m>1TCOM, m>1, secure} to 
denote $\textsf{M}$ as the \emph{computation cost}.

In this paper, we focus on a \emph{cyclic} data assignment and assume each worker knows all index sets   $\{\mathcal{Z}_n:n\in\left[\mathsf{N}\right]\}$. The cyclic assignment is easy to implement and 
 has been widely adopted in \cite{GC, Ye'18, Raviv'18, Xu'21, AdaptiveGC, m=1, m>1TCOM, m>1}. Under the cyclic assignment, dataset $D_k$, $k\in\left[\mathsf{K}\right]$,  is assigned to workers $\text{Mod}\left(k, \mathsf{N}\right), \text{Mod}\left(k-1, \mathsf{N}\right),  \ldots$, $\text{Mod}\left(k-\mathsf{N}+\mathsf{N}_{\mathrm{r}}, \mathsf{N}\right)$. Thus the set of datasets assigned to worker $n\in\left[\mathsf{N}\right]$ is  
\begin{IEEEeqnarray}{rCl} \label{CA} 
&&\mathcal{Z}_n=\mathop{\cup}\limits_{p\in\left[0:\frac{\mathsf{K}}{\mathsf{N}}-1\right]} \big\{\text{Mod}\left(n, \mathsf{N}\right)+p\mathsf{N}, \text{Mod}\left(n+1, \mathsf{N}\right)+p\mathsf{N}, \ldots,  \nonumber 
\\&&\;\;\;\;\;\;\;\;\;\;\;\;\;\;\;\;\;\;\;\;\;\;\;\;\; \text{Mod}\left(n+\mathsf{N}-\mathsf{N}_{\mathrm{r}}, \mathsf{N}\right)+p\mathsf{N}\big\}. 
\end{IEEEeqnarray} 
In this case, we have $\mathsf{M}=\frac{\mathsf{K}}{\mathsf{N}}\left(\mathsf{N}-\mathsf{N}_{\mathrm{r}}+1\right)$, which is the minimum computation cost since each dataset should be assigned to at least $\mathsf{N}-\mathsf{N}_{\mathrm{r}}+1$ workers \cite{m=1}.


\subsubsection{Computing Phase} 
In this phase, we assume the demand matrix $\mathbf{F}$ is known by all workers. This can be realized by broadcasting $\mathbf{F}$ to all workers and the resulting communication cost is almost negligible when $\mathsf{L}$ is sufficiently large. Each worker $n\in\left[\mathsf{N}\right]$ first computes $W_k=f_k\left(D_k\right)$ for any $k\in\mathcal{Z}_n$, then it creates a signal $X_n=\psi_n\left(\left\{W_k: k\in\mathcal{Z}_n\right\}, \mathbf{F}\right)\in\mathbb{F}_{\mathsf{q}}^{1\times\mathsf{T}_n}$, where the encoding function $\psi_n$ is given by $\psi_n: \left[\mathbb{F}_{\mathsf{q}}\right]^{\lvert\mathcal{Z}_n\rvert\mathsf{L}}\times \left[\mathbb{F}_{\mathsf{q}}\right]^{\mathsf{K}_{\mathrm{c}}\mathsf{K}}\rightarrow \left[\mathbb{F}_{\mathsf{q}}\right]^{\mathsf{T}_n}$.  
Finally, worker $n\in\mathcal{A}$ sends $X_n$ to all workers in $\left[\mathsf{N}\right]\setminus\left\{n\right\}$, where $\mathcal{A}$ represents the set of responding workers with $\lvert\mathcal{A}\rvert=\mathsf{N}_{\mathrm{r}}$.


\subsubsection{Decoding Phase} 
We stipulate that each worker successfully recover $g\left(W_1, \ldots, W_{\mathsf{K}}\right)$ based on the transmissions from any subset $\mathcal{A}$. In other words, the system should be able to tolerate any $\mathsf{N}-\mathsf{N}_{\mathrm{r}}$ stragglers. Worker $n\in\left[\mathsf{N}\right]$ uses $X_{\mathcal{A}}:=\left\{X_i: i\in\mathcal{A}\setminus\left\{n\right\}\right\}$ and its local messages $\left\{W_k: k\in\mathcal{Z}_n\right\}$ to recover the $\mathsf{K}_{\mathrm{c}}$ target linear combinations.  
In particular, there exists a decoding function $\phi_{\mathcal{A}, n}: \left[\mathbb{F}_{\mathsf{q}}\right]^{\sum_{i\in\mathcal{A}\setminus\left\{n\right\}} \mathsf{T}_i}\times \left[\mathbb{F}_{\mathsf{q}}\right]^{\lvert\mathcal{Z}_n\rvert\mathsf{L}}\times \left[\mathbb{F}_{\mathsf{q}}\right]^{\mathsf{K}_{\mathrm{c}}\mathsf{K}}\rightarrow \left[\mathbb{F}_{\mathsf{q}}\right]^{\mathsf{K}_{\mathrm{c}}\mathsf{L}}$ such that $\hat{g}_{\mathcal{A}, n}=\phi_{\mathcal{A}, n}\left(X_{\mathcal{A}}, \left\{W_k: k\in\mathcal{Z}_n\right\}, \mathbf{F}\right)$, for all $n\in\left[\mathsf{N}\right]$.   
          
We define the worst-case probability of error for worker $n\in\left[\mathsf{N}\right]$ as 
\begin{IEEEeqnarray}{rCl}
\varepsilon_n :=\mathop{\max}\limits_{\mathcal{A}\subseteq\left[\mathsf{N}\right]: \lvert\mathcal{A}\rvert=\mathsf{N}_{\mathrm{r}}} \Pr\left\{\hat{g}_{\mathcal{A}, n}\ne g\left(W_1, \ldots, W_{\mathsf{K}}\right)\right\}.  
\end{IEEEeqnarray} 
A computing scheme is achievable if the worst-case probability of error $\varepsilon_n\rightarrow0$ when $\mathsf{q}\rightarrow\infty$, for all $n\in\left[\mathsf{N}\right]$.  Moreover, the \emph{communication cost} is defined as 
\begin{IEEEeqnarray}{rCl} 
\mathsf{R}:=\mathop{\max}\limits_{\mathcal{A}\subseteq\left[\mathsf{N}\right]: \lvert\mathcal{A}\rvert=\mathsf{N}_{\mathrm{r}}} \frac{\sum_{n\in\mathcal{A}} \mathsf{T}_n}{\mathsf{L}}.   
\end{IEEEeqnarray} 

We denote the optimal communication cost under the cyclic assignment in \eqref{CA} as $\mathsf{R}_{\mathrm{cyc}}^*$.

\;\;\emph{Benchmark Scheme \cite{m=1}:} We can let each worker $n\in\left[\mathsf{N}\right]$ construct $X_n$ in the same manner as that of \cite{m=1}, and then send it to all workers in $\left[\mathsf{N}\right]\setminus\left\{n\right\}$, yielding the same communication cost as that in \cite{m=1}, where  
\begin{itemize} 
\item when $\mathsf{K}_{\mathrm{c}}\in\left[1:\frac{\mathsf{K}}{\mathsf{N}}\right)$, 
\begin{align} 
\mathsf{R}_{\mathrm{cec}}=\mathsf{N}_{\mathrm{r}}\mathsf{K}_{\mathrm{c}}; 
\end{align}
\item when $\mathsf{K}_{\mathrm{c}}\in\left[\frac{\mathsf{K}}{\mathsf{N}}:\frac{\mathsf{K}}{\mathsf{N}}\mathsf{N}_{\mathrm{r}}\right]$, 
\begin{align}
\mathsf{R}_{\mathrm{cec}}=\frac{\mathsf{K}}{\mathsf{N}}\mathsf{N}_{\mathrm{r}};   
\end{align} 
\item when $\mathsf{K}_{\mathrm{c}}\in\left(\frac{\mathsf{K}}{\mathsf{N}}\mathsf{N}_{\mathrm{r}}:\mathsf{K}\right]$, 
\begin{align}
\mathsf{R}_{\mathrm{cec}}=\mathsf{K}_{\mathrm{c}}.    
\end{align} 
\end{itemize}
However, the benchmark scheme falls short in effectively leveraging the side information available from each worker, as the master in \cite{m=1} does not generate any message locally.

\section{Main Results} \label{Sec:Results} 
Apparently, if $\mathsf{N}_{\mathrm{r}}=1$, each worker is assigned $\frac{\mathsf{K}}{\mathsf{N}}\left(\mathsf{N}-\mathsf{N}_{\mathrm{r}}+1\right)=\mathsf{K}$ datasets and does not need to send data to other workers, hence the communication cost is $0$. In the next, we consider the case where $\mathsf{N}_{\mathrm{r}}\in\left[2:\mathsf{N}\right]$. 
The following theorem demonstrates the performance of the proposed theorem. 
\begin{Theorem} \label{Thrm:achievability}
For the \textnormal{$\left(\mathsf{K}, \mathsf{N}, \mathsf{N}_{\mathrm{r}}, \mathsf{K}_{\mathrm{c}}, \mathsf{M}\right)$} decentralized linearly separable computation problem with \textnormal{$\mathsf{M}=\frac{\mathsf{K}}{\mathsf{N}}\left(\mathsf{N}-\mathsf{N}_{\mathrm{r}}+1\right)$}, the achieved communication cost \textnormal{$\mathsf{R}_{\mathrm{dec}}$} is given by  
\begin{itemize} 
\item when \textnormal{$\mathsf{K}_{\mathrm{c}}\in\left[1:\frac{\mathsf{K}}{\mathsf{N}}\right)$}, 
\textnormal{
\begin{align}
\mathsf{R}_{\mathrm{dec}}=\mathsf{N}_{\mathrm{r}}\mathsf{K}_{\mathrm{c}}; 
\end{align} 
}
\item when \textnormal{$\mathsf{K}_{\mathrm{c}}\in\left[\frac{\mathsf{K}}{\mathsf{N}}:\mathsf{K}\right]$}, 
\textnormal{
\begin{align}
\mathsf{R}_{\mathrm{dec}}=\frac{\mathsf{K}}{\mathsf{N}}\mathsf{N}_{\mathrm{r}}.  
\end{align} 
}
\end{itemize} 
\end{Theorem}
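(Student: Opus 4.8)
The plan is to prove achievability by exhibiting, for each range of $\mathsf{K}_{\mathrm{c}}$, an explicit transmission scheme and verifying decodability; the stated $\mathsf{R}_{\mathrm{dec}}$ will be the smaller of the costs of two schemes. The guiding reduction is as follows. Writing $\overline{\mathcal{Z}}_n:=[\mathsf{K}]\setminus\mathcal{Z}_n$ and letting $\mathbf{F}_{\mathcal{S}}$ denote the submatrix of $\mathbf{F}$ with columns indexed by $\mathcal{S}$, worker $n$ can form its local partial demand $\mathbf{F}_{\mathcal{Z}_n}[W_k:k\in\mathcal{Z}_n]$ from its own messages, so recovering $\mathbf{F}[W_1;\ldots;W_{\mathsf{K}}]$ is equivalent to recovering the missing contribution $\mathbf{F}_{\overline{\mathcal{Z}}_n}[W_k:k\in\overline{\mathcal{Z}}_n]$ of its $|\overline{\mathcal{Z}}_n|=\frac{\mathsf{K}}{\mathsf{N}}(\mathsf{N}_{\mathrm{r}}-1)$ missing messages. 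After subtracting locally known terms, every signal worker $n$ collects from $\mathcal{A}\setminus\{n\}$ is a combination of the missing messages, so the decoding requirement becomes: these collected functionals must span the rows of $\mathbf{F}_{\overline{\mathcal{Z}}_n}$. I will use random coding over $\mathbb{F}_{\mathsf{q}}$ and let $\mathsf{q}\to\infty$.

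For the range $\mathsf{K}_{\mathrm{c}}\in[1:\frac{\mathsf{K}}{\mathsf{N}})$ I would invoke the benchmark scheme \cite{m=1}: each worker $n$ forms the same signal $X_n$ it would send to the master and broadcasts it, giving cost $\mathsf{R}_{\mathrm{cec}}=\mathsf{N}_{\mathrm{r}}\mathsf{K}_{\mathrm{c}}$. The only point to check is that decentralization does not hurt decoding. A responding worker $n\in\mathcal{A}$ receives only the $\mathsf{N}_{\mathrm{r}}-1$ other signals, but it can regenerate its own $X_n$ from $\{W_k:k\in\mathcal{Z}_n\}$, hence it possesses exactly the $\mathsf{N}_{\mathrm{r}}$ signals available to the master in \cite{m=1} and decodes identically; a non-responding worker receives all $\mathsf{N}_{\mathrm{r}}$ signals directly and additionally holds its own local messages. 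Thus the master's decodability carries over verbatim.

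For the range $\mathsf{K}_{\mathrm{c}}\in[\frac{\mathsf{K}}{\mathsf{N}}:\mathsf{K}]$ I would use a new scheme in which each worker $n$ transmits $X_n=\mathbf{S}_n[W_k:k\in\mathcal{Z}_n]$ with $\mathbf{S}_n\in\mathbb{F}_{\mathsf{q}}^{\frac{\mathsf{K}}{\mathsf{N}}\times\mathsf{M}}$ having i.i.d.\ uniform entries, costing $\mathsf{R}_{\mathrm{dec}}=\frac{\mathsf{K}}{\mathsf{N}}\mathsf{N}_{\mathrm{r}}$ independently of $\mathsf{K}_{\mathrm{c}}$. The key count is that a responding worker $n\in\mathcal{A}$ collects $(\mathsf{N}_{\mathrm{r}}-1)\frac{\mathsf{K}}{\mathsf{N}}$ coded symbols, exactly matching $|\overline{\mathcal{Z}}_n|=\frac{\mathsf{K}}{\mathsf{N}}(\mathsf{N}_{\mathrm{r}}-1)$ (a non-responding worker collects strictly more, so $n\in\mathcal{A}$ is the binding case). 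Decoding amounts to stacking the received coefficient rows, projecting onto the missing coordinates, and inverting the resulting square matrix; worker $n$ then knows all of $W_1,\ldots,W_{\mathsf{K}}$ and outputs $\mathbf{F}[W_1;\ldots;W_{\mathsf{K}}]$ directly, so this scheme does not even exploit $\mathbf{F}$. Taking the smaller of this cost and $\mathsf{R}_{\mathrm{cec}}$ reproduces the two cases of the theorem: $\mathsf{N}_{\mathrm{r}}\mathsf{K}_{\mathrm{c}}$ dominates for $\mathsf{K}_{\mathrm{c}}<\frac{\mathsf{K}}{\mathsf{N}}$ while $\frac{\mathsf{K}}{\mathsf{N}}\mathsf{N}_{\mathrm{r}}$ dominates for $\mathsf{K}_{\mathrm{c}}\geq\frac{\mathsf{K}}{\mathsf{N}}$.

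The crux is proving that the square coefficient matrix is invertible for \emph{every} responding set $\mathcal{A}$ and every worker $n$, and I expect it to split into a combinatorial part and a genericity part. Since a signal from worker $i$ involves only messages in $\mathcal{Z}_i$, the matrix has a prescribed zero pattern, and by the Frobenius--K\"onig theorem its determinant is a nonzero polynomial in the entries of the $\mathbf{S}_i$ exactly when the bipartite graph linking the $\frac{\mathsf{K}}{\mathsf{N}}$ transmission slots of each helper $i\in\mathcal{A}\setminus\{n\}$ to the missing messages it holds, $\mathcal{Z}_i\cap\overline{\mathcal{Z}}_n$, admits a perfect matching. Exhibiting this matching for the cyclic assignment is the main obstacle: exploiting the block-periodic structure of \eqref{CA}, it reduces to a perfect matching \emph{within a single block} of $\mathsf{N}$ datasets, between the $\mathsf{N}_{\mathrm{r}}-1$ residues missing at $n$ and the helpers in $\mathcal{A}\setminus\{n\}$, under the rule that helper $i$ may cover residue $r$ iff $r$ is not missing at $i$. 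I would verify Hall's condition here by an arc-counting argument on the cyclic residues, the coverage number $\mathsf{N}-\mathsf{N}_{\mathrm{r}}+1$ ensuring every missing residue has enough admissible helpers. Given the matching, the determinant is a nonzero polynomial, so a single random draw of all $\{\mathbf{S}_n\}$ makes it nonzero with probability $1-O(1/\mathsf{q})$ by Schwartz--Zippel; a union bound over the finitely many pairs $(\mathcal{A},n)$ then drives the worst-case error to $0$ as $\mathsf{q}\to\infty$, completing the achievability.
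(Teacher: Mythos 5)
Your proposal is correct, and for the regime $\mathsf{K}_{\mathrm{c}}\geq\frac{\mathsf{K}}{\mathsf{N}}$ it takes a genuinely different route from the paper. The paper keeps the benchmark scheme of \cite{m=1} for all $\mathsf{K}_{\mathrm{c}}\leq\frac{\mathsf{K}}{\mathsf{N}}\mathsf{N}_{\mathrm{r}}$ and introduces its new scheme only for $\mathsf{K}_{\mathrm{c}}>\frac{\mathsf{K}}{\mathsf{N}}\mathsf{N}_{\mathrm{r}}$; that scheme codes in the space of the demanded combinations: worker $n$ draws $\frac{\mathsf{K}}{\mathsf{N}}$ random vectors from $\mathcal{N}\left(\left(\overline{\mathbf{F}_n}\right)^{T}\right)$ and broadcasts the corresponding combinations of $F_1,\ldots,F_{\mathsf{K}_{\mathrm{c}}}$, and correctness is the full-rankness of a $\mathsf{K}_{\mathrm{c}}\times\mathsf{K}_{\mathrm{c}}$ coefficient matrix (Lemma \ref{fullrank}), proved in Appendix \ref{Sec:Correctness} by a double induction on span dimensions that in turn leans on \cite[Lemma 2]{m=1}. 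You instead code the raw local messages for the entire range $\mathsf{K}_{\mathrm{c}}\geq\frac{\mathsf{K}}{\mathsf{N}}$: each worker sends $\frac{\mathsf{K}}{\mathsf{N}}$ uniformly random combinations of $\left\{W_k:k\in\mathcal{Z}_n\right\}$, every worker solves a square system for its $\frac{\mathsf{K}}{\mathsf{N}}\left(\mathsf{N}_{\mathrm{r}}-1\right)$ missing messages, and then evaluates $\mathbf{F}$ locally, so correctness reduces to a zero-pattern determinant handled by a perfect matching plus Schwartz--Zippel. Your combinatorial core does hold, though you only sketch it and it should be written out: the missing residues of worker $n$ are $\mathsf{N}_{\mathrm{r}}-1$ consecutive residues, the holders of any residue form an arc of $\mathsf{N}-\mathsf{N}_{\mathrm{r}}+1$ consecutive workers never containing $n$, so the neighborhood of any $t$ missing residues is a union of $t$ distinct cyclic translates of such an arc and has size at least $\mathsf{N}-\mathsf{N}_{\mathrm{r}}+t$; removing the at most $\mathsf{N}-\mathsf{N}_{\mathrm{r}}$ stragglers leaves at least $t$ admissible helpers, which is exactly Hall's condition. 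What your route buys: a single, demand-oblivious scheme (the encoders never touch $\mathbf{F}$) covering all of $\mathsf{K}_{\mathrm{c}}\geq\frac{\mathsf{K}}{\mathsf{N}}$, and a self-contained correctness argument that avoids both the paper's inductive appendix and its reliance on \cite[Lemma 2]{m=1}; it also makes transparent why side information cannot push the cost below $\frac{\mathsf{K}}{\mathsf{N}}\mathsf{N}_{\mathrm{r}}$ at minimum computation cost. What the paper's route buys: decoding stays in the space of $F_1,\ldots,F_{\mathsf{K}_{\mathrm{c}}}$, so a worker inverts one matrix and reads off the answer at complexity $\mathcal{O}\left(\mathsf{K}_{\mathrm{c}}^2\mathsf{L}\right)$, whereas your decoder must additionally multiply by $\mathbf{F}$ at cost $\mathcal{O}\left(\mathsf{K}_{\mathrm{c}}\mathsf{K}\mathsf{L}\right)$; moreover, your scheme reveals all $\mathsf{K}$ messages to every worker, which is harmless here but would break in extensions with security constraints such as \cite{secure}, where the paper's null-space coding is the natural starting point.
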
 
\begin{proof} 
When $\mathsf{K}_{\mathrm{c}}\in\left[\frac{\mathsf{K}}{\mathsf{N}}\mathsf{N}_{\mathrm{r}}\right]$, 
we opt to directly employ the benchmark scheme, resulting in the same communication cost. 
When $\mathsf{K}_{\mathrm{c}}\in\left(\frac{\mathsf{K}}{\mathsf{N}}\mathsf{N}_{\mathrm{r}}:\mathsf{K}\right]$, the detailed proof is provided in Section \ref{Sec:Scheme}. The key idea is as follows.  According to the local messages and demanded linear combinations, each worker sends the minimum number of linear combinations of messages uniformly at random to other workers.  Finally, each worker decodes the $\mathsf{K}_{\mathrm{c}}$ desired linear combinations based on the local and delivered messages. 
\end{proof} 
\begin{Remark} 
Our proposed scheme outperforms the benchmark scheme when \textnormal{$\mathsf{K}_{\mathrm{c}}\in\left(\frac{\mathsf{K}}{\mathsf{N}}\mathsf{N}_{\mathrm{r}}:\mathsf{K}\right]$}, 
the proposed scheme fully exploits the locally computed messages of the workers such that the communication cost can be reduced, resulting a performance gain of \textnormal{$\mathsf{K}_{\mathrm{c}}-\frac{\mathsf{K}}{\mathsf{N}}\mathsf{N}_{\mathrm{r}}$}. 
\end{Remark} 
\begin{Remark} 
When \textnormal{$\mathsf{K}_{\mathrm{c}}\in\left(\frac{\mathsf{K}}{\mathsf{N}}\mathsf{N}_{\mathrm{r}}:\mathsf{K}\right]$}, there exist new challenges in the correctness proof of our proposed scheme. The main challenge is to show that under different side information across the workers and for arbitrary $\mathsf{N}-\mathsf{N}_{\mathrm{r}}$ stragglers, each worker is still capable of decoding the desired computation task from fewer received coded signals than the benchmark schemes. In Appendix \ref{Sec:Correctness}, we formally prove that with high probability the demanded linear combinations of each worker lie in the linear span of its known messages and the received coded signals.     
\end{Remark} 
\begin{Theorem}[Optimality]  \label{Thrm:optimality}
For the \textnormal{$\left(\mathsf{K}, \mathsf{N}, \mathsf{N}_{\mathrm{r}}, \mathsf{K}_{\mathrm{c}}, \mathsf{M}\right)$} decentralized linearly separable computation problem with \textnormal{$\mathsf{M}=\frac{\mathsf{K}}{\mathsf{N}}\left(\mathsf{N}-\mathsf{N}_{\mathrm{r}}+1\right)$}, we have 
\textnormal{
\begin{IEEEeqnarray}{rCl}
\mathsf{R}_{\mathrm{cyc}}^*=\mathsf{R}_{\mathrm{dec}}.  
\end{IEEEeqnarray}  
} 
\end{Theorem}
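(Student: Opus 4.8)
The plan is to prove $\mathsf{R}_{\mathrm{cyc}}^*=\mathsf{R}_{\mathrm{dec}}$ by establishing a matching converse (lower) bound, since Theorem~\ref{Thrm:achievability} already supplies the achievability $\mathsf{R}_{\mathrm{cyc}}^*\le\mathsf{R}_{\mathrm{dec}}$. The converse splits according to the two regimes of $\mathsf{K}_{\mathrm{c}}$. For $\mathsf{K}_{\mathrm{c}}\in[\frac{\mathsf{K}}{\mathsf{N}}:\mathsf{K}]$ I must show $\mathsf{R}_{\mathrm{cyc}}^*\ge\frac{\mathsf{K}}{\mathsf{N}}\mathsf{N}_{\mathrm{r}}$, and for $\mathsf{K}_{\mathrm{c}}\in[1:\frac{\mathsf{K}}{\mathsf{N}})$ I must show $\mathsf{R}_{\mathrm{cyc}}^*\ge\mathsf{N}_{\mathrm{r}}\mathsf{K}_{\mathrm{c}}$. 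The natural tool is a cut-set / entropy argument exploiting the straggler-tolerance requirement: since any set $\mathcal{A}$ of $\mathsf{N}_{\mathrm{r}}$ responding workers must let every worker reconstruct all $\mathsf{K}_{\mathrm{c}}$ linear combinations, each individual transmitted signal $X_n$ must carry ``enough'' information, and summing these per-worker lower bounds over $\mathcal{A}$ yields the bound on $\mathsf{R}$.

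First I would fix an arbitrary responding set $\mathcal{A}$ and a target worker $n_0\notin\mathcal{A}$ (such a worker exists once $\mathsf{N}_{\mathrm{r}}<\mathsf{N}$; the boundary case $\mathsf{N}_{\mathrm{r}}=\mathsf{N}$ must be handled by choosing the decoder as a worker in $\mathcal{A}$ and subtracting its own contribution). Worker $n_0$ recovers $g(W_1,\dots,W_{\mathsf{K}})=\mathbf{F}[W_1;\dots;W_{\mathsf{K}}]$ from $X_{\mathcal{A}}=\{X_i:i\in\mathcal{A}\}$ together with its local messages $\{W_k:k\in\mathcal{Z}_{n_0}\}$. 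Writing $H(\mathbf{F}[W])$ for the entropy of the demanded combinations, the decoding condition gives, via Fano, $\sum_{i\in\mathcal{A}}\mathsf{T}_i\log\mathsf{q}\ge H(\mathbf{F}[W]\mid\{W_k:k\in\mathcal{Z}_{n_0}\})+o(\mathsf{L}\log\mathsf{q})$. The right-hand side is essentially the number of ``genuinely new'' linear combinations that $n_0$ cannot form from its own $\mathsf{M}$ messages, normalized per symbol. The crux is to compute this conditional entropy: because $\mathbf{F}$ has i.i.d.\ generic entries over a large field, the $\mathsf{K}_{\mathrm{c}}$ rows restricted to the $\mathsf{K}-\mathsf{M}$ coordinates outside $\mathcal{Z}_{n_0}$ have full generic rank $\min\{\mathsf{K}_{\mathrm{c}},\mathsf{K}-\mathsf{M}\}$, so the conditional entropy equals $\min\{\mathsf{K}_{\mathrm{c}},\mathsf{K}-\mathsf{M}\}\,\mathsf{L}\log\mathsf{q}$. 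With $\mathsf{K}-\mathsf{M}=\frac{\mathsf{K}}{\mathsf{N}}(\mathsf{N}_{\mathrm{r}}-1)$, this already gives a per-set bound; the point is that this single cut does not yet yield the full factor $\mathsf{N}_{\mathrm{r}}$, so the argument must be strengthened.

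To recover the extra factor I would use the symmetry of the cyclic assignment and a counting/averaging argument over \emph{all} target workers simultaneously, or equivalently pick a clever family of cuts. The key observation is that under the cyclic assignment the missing-coordinate sets $[\mathsf{K}]\setminus\mathcal{Z}_{n}$ shift cyclically with $n$, so by varying the decoding worker $n_0$ (or by considering the genie-aided chain in which workers are revealed one at a time) the individual signals $X_i$ are each forced to contribute independent new information across $\mathsf{N}_{\mathrm{r}}$ distinct ``rounds.'' Formally I would apply a telescoping entropy expansion $H(X_{\mathcal{A}})=\sum_{j}H(X_{i_j}\mid X_{i_1},\dots,X_{i_{j-1}})$ and lower-bound each increment by the number of newly decodable combinations contributed at that stage, using the full-rank genericity of $\mathbf{F}$ to guarantee that no increment collapses. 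Summing the increments over the $\mathsf{N}_{\mathrm{r}}$ workers in $\mathcal{A}$ and combining with the regime-dependent value of $\min\{\mathsf{K}_{\mathrm{c}},\frac{\mathsf{K}}{\mathsf{N}}\}$ yields $\sum_{n\in\mathcal{A}}\mathsf{T}_n/\mathsf{L}\ge\mathsf{N}_{\mathrm{r}}\min\{\mathsf{K}_{\mathrm{c}},\frac{\mathsf{K}}{\mathsf{N}}\}$, matching $\mathsf{R}_{\mathrm{dec}}$ in both regimes.

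The main obstacle I anticipate is precisely this last step: ensuring the per-worker information contributions do not overlap, i.e.\ that the $\mathsf{N}_{\mathrm{r}}$ entropy increments genuinely add up to the factor $\mathsf{N}_{\mathrm{r}}$ rather than saturating early. This requires carefully tracking, stage by stage, which rows of $\mathbf{F}$ remain undetermined given the already-revealed signals and local messages, and invoking the generic full-rank property of the random demand matrix $\mathbf{F}$ at each stage to certify that each increment has the claimed dimension with probability approaching $1$ as $\mathsf{q}\to\infty$. Handling the two regimes uniformly --- where the binding constraint switches between $\mathsf{K}_{\mathrm{c}}$ and $\frac{\mathsf{K}}{\mathsf{N}}$ --- and cleanly managing the boundary case $n_0\in\mathcal{A}$ when $\mathsf{N}_{\mathrm{r}}=\mathsf{N}$ are the technical points I would devote the most care to.
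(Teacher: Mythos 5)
Your overall frame (achievability from Theorem~\ref{Thrm:achievability} plus a matching converse, together with the correct observation that a single cut-set bound only yields $\min\{\mathsf{K}_{\mathrm{c}},\frac{\mathsf{K}}{\mathsf{N}}(\mathsf{N}_{\mathrm{r}}-1)\}$) is sound, but the device you propose to recover the factor $\mathsf{N}_{\mathrm{r}}$ --- telescoping $H(X_{\mathcal{A}})=\sum_j H(X_{i_j}\mid X_{i_1},\ldots,X_{i_{j-1}})$ for a \emph{fixed} responding set $\mathcal{A}$ and lower-bounding each increment via decodability --- cannot work, and this is a genuine gap. The decoding requirements attached to one fixed $\mathcal{A}$, even imposed for all $\mathsf{N}$ decoding workers, do not imply $\sum_{n\in\mathcal{A}}\mathsf{T}_n\ge \mathsf{N}_{\mathrm{r}}\min\{\mathsf{K}_{\mathrm{c}},\frac{\mathsf{K}}{\mathsf{N}}\}\mathsf{L}$. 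Concretely, take $\mathsf{K}=\mathsf{N}=4$, $\mathsf{N}_{\mathrm{r}}=3$, $\mathsf{K}_{\mathrm{c}}=1$, so that $\mathcal{Z}_1=\{1,2\}$, $\mathcal{Z}_2=\{2,3\}$, $\mathcal{Z}_3=\{3,4\}$, $\mathcal{Z}_4=\{4,1\}$ and $F_1=f_1W_1+f_2W_2+f_3W_3+f_4W_4$, and fix $\mathcal{A}=\{1,2,3\}$. Setting $X_1=f_1W_1+f_2W_2$, $X_2$ empty (i.e., $\mathsf{T}_2=0$), and $X_3=f_3W_3+f_4W_4$ lets every worker decode $F_1$: workers $1$ and $3$ add their local halves to $X_3$ and $X_1$ respectively, while workers $2$ and $4$ simply compute $X_1+X_3$. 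The total cost is $2\mathsf{L}$, strictly below the target $\mathsf{N}_{\mathrm{r}}\cdot\frac{\mathsf{K}}{\mathsf{N}}\mathsf{L}=3\mathsf{L}$. So within a fixed $\mathcal{A}$ an increment really can collapse (worker $2$ contributes nothing), and no genericity of $\mathbf{F}$ prevents this; worse, in the regime $\mathsf{K}_{\mathrm{c}}\le\frac{\mathsf{K}}{\mathsf{N}}$ the bound you must prove, $\mathsf{N}_{\mathrm{r}}\mathsf{K}_{\mathrm{c}}\mathsf{L}$, exceeds the entire task entropy $\mathsf{K}_{\mathrm{c}}\mathsf{L}$, so it can never follow from what any collection of decoders needs under a single straggler pattern.

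The missing idea is to use decodability across \emph{different} responding sets, exploiting that the encodings $X_n$ are committed before the straggler realization is known; this is exactly what the paper does, following \cite[Appendix B]{m=1}. For each worker $n$ separately, choose the stragglers adversarially to be the other $\mathsf{N}-\mathsf{N}_{\mathrm{r}}$ workers that store the datasets $\left\{n+p\mathsf{N}: p\in\left[0:\frac{\mathsf{K}}{\mathsf{N}}-1\right]\right\}$, i.e., take the responding set $\mathcal{R}_1=\left[\mathsf{N}\right]\setminus\left\{\text{Mod}\left(n-1,\mathsf{N}\right),\ldots,\text{Mod}\left(n-\mathsf{N}+\mathsf{N}_{\mathrm{r}},\mathsf{N}\right)\right\}$. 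Then worker $n$ is the \emph{unique} responder knowing $W_{n+p\mathsf{N}}$, and any decoding worker that does not hold these messages must extract the $\min\{\mathsf{K}_{\mathrm{c}},\frac{\mathsf{K}}{\mathsf{N}}\}$ generically independent dimensions that they occupy in the demanded combinations entirely from $X_n$, forcing $\mathsf{T}_n\ge\min\{\mathsf{K}_{\mathrm{c}},\frac{\mathsf{K}}{\mathsf{N}}\}\mathsf{L}$. Since the $X_n$ do not depend on $\mathcal{A}$, these per-worker bounds hold simultaneously for all $n\in\left[\mathsf{N}\right]$, and summing over any responding set of size $\mathsf{N}_{\mathrm{r}}$ gives $\mathsf{R}_{\mathrm{cyc}}^*\ge\mathsf{N}_{\mathrm{r}}\min\{\mathsf{K}_{\mathrm{c}},\frac{\mathsf{K}}{\mathsf{N}}\}=\mathsf{R}_{\mathrm{dec}}$ in both regimes at once (this also dissolves your worry about the $\mathsf{N}_{\mathrm{r}}=\mathsf{N}$ boundary case, since each message then has a unique holder). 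Your cyclic-symmetry intuition points in the right direction, but per-worker adversarial straggler sets --- not entropy telescoping within one $\mathcal{A}$ --- are the essential device.
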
 
\begin{proof} 
The proof follows an idea similar to that in \cite[Appendix B]{m=1}. In particular, since this paper focuses on the worst case of $\mathsf{N}-\mathsf{N}_{\mathrm{r}}$ stragglers, we would choose the set of straggler workers such that the communication cost is as large as possible. Consider worker \textnormal{$n\in\left[\mathsf{N}\right]$}  and the   \textnormal{$\frac{\mathsf{K}}{\mathsf{N}}$} messages \textnormal{$W_{n+p\mathsf{N}}, p=0, \ldots, \frac{\mathsf{K}}{\mathsf{N}}-1$}. By the cyclic assignment in \eqref{CA}, these messages  are uniquely computed by the workers in \textnormal{$\left\{n\right\}\cup\left(\left[\mathsf{N}\right]\setminus\mathcal{R}_1\right)$},  where  $\mathcal{R}_1:=\left[\mathsf{N}\right]\setminus\left\{\text{Mod}\left(n-1, \mathsf{N}\right), \ldots, \text{Mod}\left(n-\mathsf{N}+\mathsf{N}_{\mathrm{r}}, \mathsf{N}\right)\right\}$  is assumed to be the set of responding workers. 
Following the analysis in \cite[Appendix B]{m=1}, we can derive that worker $n$ needs to transmit at least \textnormal{$\min\left\{\mathsf{K}_{\mathrm{c}}, \frac{\mathsf{K}}{\mathsf{N}}\right\}\mathsf{L}$} symbols to the other workers. Summing up the transmitted symbols from all the \textnormal{$\mathsf{N}_{\mathrm{r}}$} responding workers results in \textnormal{$\mathsf{R}_{\mathrm{cyc}}^*$}. 
\end{proof}
\begin{figure}
\centering
\includegraphics[width=1.0\linewidth,scale=1]{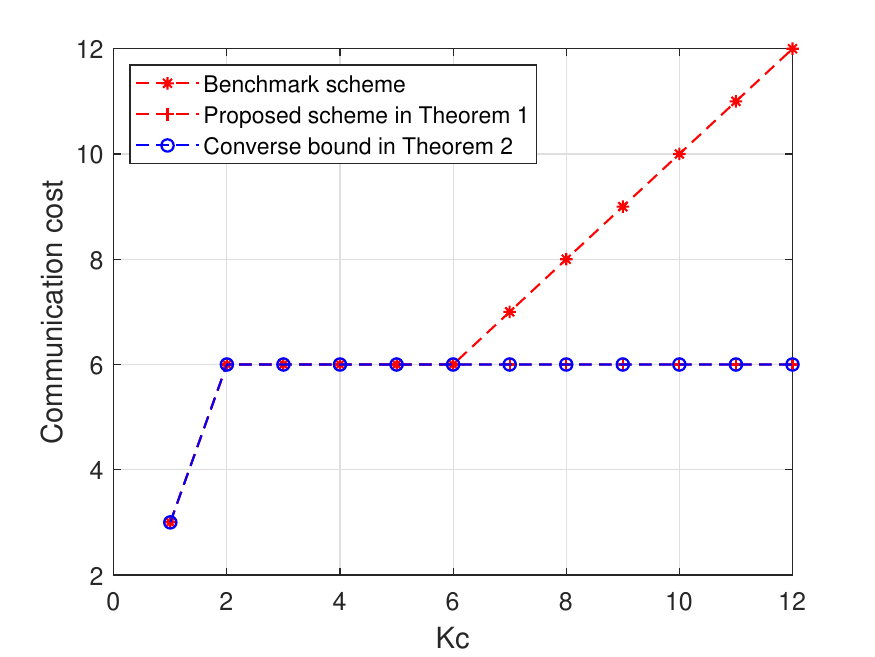}  
\caption{Communication costs for $\mathsf{K}=12$, $\mathsf{N}=6$, $\mathsf{N}_{\mathrm{r}}=3$.}
\label{R_Kc} 
\end{figure}
\begin{figure}
\centering
\includegraphics[width=1.0\linewidth,scale=1]{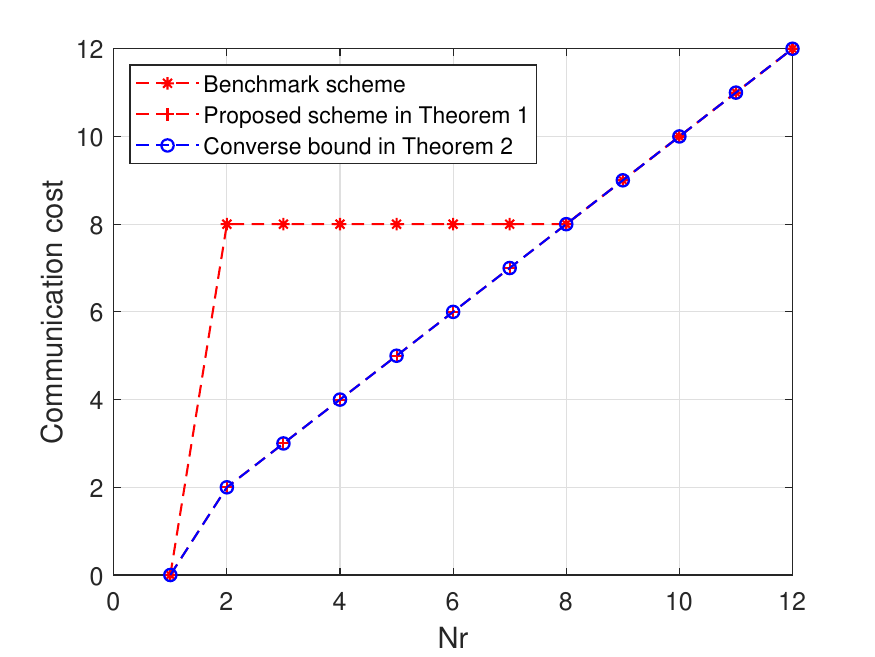}  
\caption{Communication costs for $\mathsf{K}=12$, $\mathsf{N}=12$, $\mathsf{K}_{\mathrm{c}}=8$.}
\label{R_Nr} 
\end{figure}
\begin{Remark} 
Surprisingly, when \textnormal{$\mathsf{K}_{\mathrm{c}}\in\left[\frac{\mathsf{K}}{\mathsf{N}}\mathsf{N}_{\mathrm{r}}\right]$}, the locally computed messages of each worker cannot help reduce the communication cost compared to the centralized scheme \cite{m=1}. This means that, it suffices to adopt the benchmark scheme to achieve the optimal communication cost under the cyclic assignment. The main reason is as follows. Since the data assignment has the minimum computation cost (i.e., each dataset is assigned to $\mathsf{N}-\mathsf{N}_{\mathrm{r}}+1$ workers), to overcome the worst case of stragglers, each worker $n\in\left[\mathsf{N}\right]$ must send coded signals that carry at least $\frac{\mathsf{K}}{\mathsf{N}}$ local messages (each of which contains $\mathsf{L}$ symbols) generated from the local datasets $\mathcal{Z}_n$. In other words, the $\mathsf{N}_{\mathrm{r}}$ delivered signals carry at least $\frac{\mathsf{K}}{\mathsf{N}}\mathsf{N}_\mathrm{r}\mathsf{L}$ useful symbols. When $\mathsf{N}_{\mathrm{r}}$ is sufficiently large compared to $\mathsf{K}_{\mathrm{c}}$, the must-sent amount of delivered signals is sufficient to recover $\mathsf{K}_{\mathrm{c}}$ target linear combinations without the utilization of local messages.

\end{Remark} 

Fig. \ref{R_Kc} depicts the relationship between communication cost and the number of linear combinations $\mathsf{K}_{\mathrm{c}}$ when $\mathsf{K}=12$, $\mathsf{N}=6$, and $\mathsf{N}_{\mathrm{r}}=3$. Clearly, when $\mathsf{K}_{\mathrm{c}}\in\left[2:6\right]$, our proposed scheme achieves the same communication cost as the benchmark scheme. When  $\mathsf{K}_{\mathrm{c}}\in\left[7:12\right]$, the proposed scheme holds an advantage over the benchmark scheme as its communication cost remains constant, whereas that of the benchmark scheme increases linearly with $\mathsf{K}_{\mathrm{c}}$. 
Hence our proposed scheme can achieve a better performance for large $\mathsf{K}_{\mathrm{c}}$. 

Fig. \ref{R_Nr} displays the relationship between communication cost and the number of responding workers $\mathsf{N}_{\mathrm{r}}$ when $\mathsf{K}=12$, $\mathsf{N}=12$, and $\mathsf{K}_{\mathrm{c}}=8$. When $\mathsf{N}_{\mathrm{r}}=1$, the communication costs of both schemes are $0$. When $\mathsf{N}_{\mathrm{r}}\in\left[2:7\right]$, the proposed scheme achieves a smaller communication cost than the benchmark scheme; when $\mathsf{N}_{\mathrm{r}}\in\left[8:12\right]$, the two schemes have the same performance. Hence our proposed scheme is able to achieve a better performance for small $\mathsf{N}_{\mathrm{r}}$. 

Moreover, in both of the $2$ figures, the communication cost of the proposed scheme coincides with the converse bound under the cyclic assignment in Theorem \ref{Thrm:optimality}.

\section{Achievable Distributed Computing Scheme} 
\label{Sec:Scheme} 
In this section, we formally introduce the proposed computing scheme for the D2D network. Before providing the general scheme, we first present an example to illustrate the main idea. 

\begin{Example} \label{m=1exp} \textnormal{($\mathsf{K}=4, \mathsf{N}=4, \mathsf{N}_{\mathrm{r}}=3, \mathsf{K}_{\mathrm{c}}=4, \mathsf{M}=2$):} 

\;\;\textit{Data Assignment Phase:} 
\textnormal{
We apply the cyclic assignment, i.e.,} 
 \begin{table}[htbp]
\begin{center}
\begin{tabular}{|c|c|c|c|}
\hline
     Worker $1$ & Worker $2$ & Worker $3$ & Worker $4$ \\  \hline 
    $D_1$ & $D_2$ & $D_3$ & $D_4$ \\  
    $D_2$ & $D_3$ & $D_4$ & $D_1$ \\ \hline 
\end{tabular}
\label{tab1}
\end{center}
\end{table} 

\;\;\textit{Computing Phase:} \textnormal{WLOG we let the computing task be 
\begin{IEEEeqnarray}{rCl}  
f\left(D_1, D_2, D_3, D_4\right)&&=\left[F_1; F_2; F_3; F_4\right]
=\mathbf{F}\left[W_1; W_2; W_3; W_4\right] \nonumber \\ 
&&=\begin{bmatrix} 
1 & 1 & 1 & 1 \\ 
1 & 2 & 3 & 4 \\ 
1 & 0 & 2 & 3 \\ 
1 & 2 & 1 & 4  
\end{bmatrix} 
\begin{bmatrix} 
W_1 \\ 
W_2 \\ 
W_3 \\                                 
W_4                                    
\end{bmatrix},   
\end{IEEEeqnarray}  
where
\begin{IEEEeqnarray}{rCl} 
&&F_1=W_1+W_2+W_3+W_4, \\ 
&&F_2=W_1+2W_2+3W_3+4W_4, \\ 
&&F_3=W_1+2W_3+3W_4, \\ 
&&F_4=W_1+2W_2+W_3+4W_4.   
\end{IEEEeqnarray} 
} 
\textnormal{
For simplicity, in this example, $\mathbb{F}_{\mathsf{q}}$ is assumed to be a sufficiently large prime field, which is not necessary for the general scheme where we only require that the field size $\mathsf{q}$ be sufficiently large.  
}

\textnormal{We first focus on worker $1$, who cannot compute $W_3$ and $W_4$. Let $\overline{\mathbf{F}_1}$ denote the sub-matrix of $\mathbf{F}$ comprised of the columns of $\mathbf{F}$ with indices in $\left\{3, 4\right\}$, 
which is a full-rank matrix with dimension $4\times2$. A possible vector basis for $\mathcal{N}\left(\left(\overline{\mathbf{F}_1}\right)^{T}\right)$ is $\mathbf{u}_{1, 1}=\left[0, -5, 8, -1\right]^T, \mathbf{u}_{1, 2}=\left[5, 0, -3, 1\right]^T$. Then worker $1$ computes}    
\begin{IEEEeqnarray}{rCl} 
&&0F_1-5F_2+8F_3-1F_4=2W_1-12W_2, \\ 
&&5F_1+0F_2-3F_3+1F_4=3W_1+7W_2.  
\end{IEEEeqnarray} 

\textnormal{For worker $2$ who cannot compute $W_1$ and $W_4$, let $\overline{\mathbf{F}_2}$ denote the sub-matrix of $\mathbf{F}$ comprised of the columns of $\mathbf{F}$ with indices in $\left\{1, 4\right\}$.      
A possible vector basis for $\mathcal{N}\left(\left(\overline{\mathbf{F}_2}\right)^{T}\right)$ is $\mathbf{u}_{2, 1}=\left[0, -1, 0, 1\right]^T, \mathbf{u}_{2, 2}=\left[-1, -2, 3, 0\right]^T$. Then worker $2$ computes}   
\begin{IEEEeqnarray}{rCl}  
&&0F_1-1F_2+0F_3+1F_4=-2W_3, \\ 
&&-1F_1-2F_2+3F_3+0F_4=-5W_2-W_3.  
\end{IEEEeqnarray}  

\textnormal{For worker $3$ who cannot compute $W_1$ and $W_2$, let $\overline{\mathbf{F}_3}$ denote the sub-matrix of $\mathbf{F}$ comprised of the columns of $\mathbf{F}$ with indices in $\left\{1, 2\right\}$.  
A possible vector basis for $\mathcal{N}\left(\left(\overline{\mathbf{F}_3}\right)^{T}\right)$ is $\mathbf{u}_{3, 1}=\left[2, 3, -1, -4\right]^T, \mathbf{u}_{3, 2}=\left[-6, -2, 3, 5\right]^T$. Then worker $3$ computes}   
\begin{IEEEeqnarray}{rCl}  
&&2F_1+3F_2-1F_3-4F_4=5W_3-5W_4, \\ 
&&-6F_1-2F_2+3F_3+5F_4=-W_3+15W_4.  
\end{IEEEeqnarray}  

\textnormal{For worker $4$ who cannot compute $W_2$ and $W_3$, let $\overline{\mathbf{F}_4}$ denote the sub-matrix of $\mathbf{F}$ comprised of the columns of $\mathbf{F}$ with indices in $\left\{2, 3\right\}$.  
A possible vector basis for $\mathcal{N}\left(\left(\overline{\mathbf{F}_4}\right)^{T}\right)$ is $\mathbf{u}_{4, 1}=\left[0, -1, 1, 1\right]^T, \mathbf{u}_{4, 2}=\left[4, -4, 3, 2\right]^T$. Then worker $4$ computes }
\begin{IEEEeqnarray}{rCl}  
&&0F_1-1F_2+1F_3+1F_4=W_1+3W_4, \\ 
&&4F_1-4F_2+3F_3+2F_4=5W_1+5W_4.  
\end{IEEEeqnarray}  

\textnormal{Next we show that each worker can send fewer coded messages than the benchmark scheme, while being still able to recover the target linear combinations.  Each worker $n\in\{1,2,3,4\}$ selects $\frac{\mathsf{K}}{\mathsf{N}}\left(\mathsf{N}_{\mathrm{r}}-1\right)=1$ vector $\mathbf{v}_{n, 1}$ uniformly at random from $\mathrm{span}\left\{\mathbf{u}_{n, 1}, \mathbf{u}_{n, 2}, \mathbf{u}_{n, 3}\right\}$. WLOG we let} 
\begin{IEEEeqnarray}{rCl} 
&&\mathbf{v}_{1, 1}=\mathbf{u}_{1, 1}=\left[0, -5, 8, -1\right]^T, \\            
&&\mathbf{v}_{2, 1}=\mathbf{u}_{2, 2}=\left[-1, -2, 3, 0\right]^T, \\            
&&\mathbf{v}_{3, 1}=\mathbf{u}_{3, 1}+\mathbf{u}_{3, 2}=\left[-4, 1, 2, 1\right]^T, \\            
&&\mathbf{v}_{4, 1}=2\mathbf{u}_{4, 1}+3\mathbf{u}_{4, 2}=\left[12, -14, 11, 8\right]^T.             
\end{IEEEeqnarray}        
\textnormal{Worker $n\in\left[4\right]$ then sends 
$X_n=\mathbf{v}_{n, 1}^T
\left[F_1; \cdots; F_4\right] 
$ to all workers in $\left[4\right]\setminus\left\{n\right\}$. In particular, we have} 
\begin{IEEEeqnarray}{rCl}  
\begin{bmatrix} 
X_1 \\[1pt] 
X_2 \\[1pt] 
X_3 \\[1pt] 
X_4  
\end{bmatrix} 
=\begin{bmatrix} 
\mathbf{v}_{1, 1}^T \\[3pt] 
\mathbf{v}_{2, 1}^T \\[3pt] 
\mathbf{v}_{3, 1}^T \\[3pt]  
\mathbf{v}_{4, 1}^T 
\end{bmatrix} 
\begin{bmatrix} 
F_1 \\[1pt] 
F_2 \\[1pt] 
F_3 \\[1pt] 
F_4  
\end{bmatrix} 
=\begin{bmatrix} 
2W_1-12W_2 \\[2pt] 
-5W_2-W_3 \\[2pt] 
4W_3+10W_4 \\[2pt] 
17W_1+21W_4     
\end{bmatrix}.  
\end{IEEEeqnarray}

\;\;\textit{Decoding Phase:} 
\textnormal{WLOG, we assume the set of responding workers are $\left\{1, 2, 3\right\}$. For each responding worker $i\in\{1,2,3\}$, after receiving the transmissions from other responding workers, worker $i$ has $4$ linear combinations of $F_1, \ldots, F_4$, i.e.,  
\begin{IEEEeqnarray}{rCl} \label{lcsexp1}
&&\begin{bmatrix} 
\mathbf{v}_{1, 1}^T \\ 
\vdots \\ 
\mathbf{v}_{i-1, 1}^T \\[3pt] 
\mathbf{u}_{i, 1}^T \\[3pt]   
\mathbf{u}_{i, 2}^T \\[3pt]  
\mathbf{v}_{i+1, 1}^T \\ 
\vdots \\ 
\mathbf{v}_{3, 1}^T   
\end{bmatrix} 
\begin{bmatrix} 
F_1 \\[1pt] 
F_2 \\[1pt] 
F_3 \\[1pt] 
F_4  
\end{bmatrix} 
:=\mathbf{S}_{\left[3\right]}^{\left(i\right)} 
\begin{bmatrix} 
F_1 \\[1pt] 
F_2 \\[1pt] 
F_3 \\[1pt] 
F_4  
\end{bmatrix}.  
\end{IEEEeqnarray}  
It can be checked that the matrix $\mathbf{S}_{\left[3\right]}^{\left(i\right)}$ is full-rank, for any $i\in\{1,2,3\}$. Thus worker $i\in\{1,2,3\}$ can recover the desired linear combinations by multiplying \eqref{lcsexp1} with $\left(\mathbf{S}_{\left[3\right]}^{\left(i\right)}\right)^{-1}$. 
} 

\textnormal{For worker $4$, it can decode $F_1, \ldots, F_4$ with its local content and the transmissions from any $2$ responding workers $i_1, i_2\in\{1,2,3\}$. In particular, consider the following $4$ linear combinations of $F_1, \ldots, F_4$,} 
\begin{IEEEeqnarray}{rCl} \label{lcsw4}
&&\begin{bmatrix}
\mathbf{v}_{i_1, 1}^T \\[3pt]  
\mathbf{v}_{i_2, 1}^T \\[3pt]  
\mathbf{u}_{4, 1}^T \\[3pt] 
\mathbf{u}_{4, 2}^T 
\end{bmatrix} 
\begin{bmatrix} 
F_1 \\[1pt] 
F_2 \\[1pt] 
F_3 \\[1pt] 
F_4  
\end{bmatrix} 
:=\mathbf{S}_{\left\{i_1, i_2, 4\right\}}^{\left(4\right)} 
\begin{bmatrix} 
F_1 \\[1pt] 
F_2 \\[1pt] 
F_3 \\[1pt] 
F_4   
\end{bmatrix}.  
\end{IEEEeqnarray}  
It can be checked that the matrix $\mathbf{S}_{\left\{i_1, i_2, 4\right\}}^{\left(4\right)}$ is full-rank, regardless of the choices of $i_1$ and $i_2$. Thus worker $4$ can recover the desired linear combinations by multiplying \eqref{lcsw4} with $\left(\mathbf{S}_{\left\{i_1, i_2, 4\right\}}^{\left(4\right)}\right)^{-1}$.

\;\;\textit{Performance:} 
\textnormal{The communication cost is $\frac{3\mathsf{L}}{\mathsf{L}}=3$, which coincides with the converse bound. If we directly apply the benchmark scheme, the communication cost will be $4$. Thus our scheme is able to achieve a better performance.   
}

\end{Example} 

Next, we provide the general description of our proposed scheme. We first focus on the case where $\mathsf{K}_{\mathrm{c}}\in\left(\frac{\mathsf{K}}{\mathsf{N}}\mathsf{N}_{\mathrm{r}}: \mathsf{K}\right]$. 

\;\;\emph{Data Assignment Phase:} We assign the $\mathsf{K}$ datasets to the $\mathsf{N}$ workers under the cyclic asignment in \eqref{CA}. 

\;\;\emph{Computing Phase:} 
We let $\overline{\mathcal{Z}_n}:=\left[\mathsf{K}\right]\setminus\mathcal{Z}_n$ denote the set of datasets not assigned to worker $n\in\left[\mathsf{N}\right]$, then let $\overline{\mathbf{F}_n}$ denote the sub-matrix of $\mathbf{F}$ comprised of the columns of $\mathbf{F}$ with indices in $\overline{\mathcal{Z}_n}$, which has a dimension of $\mathsf{K}_{\mathrm{c}}\times\frac{\mathsf{K}}{\mathsf{N}}\left(\mathsf{N}_{\mathrm{r}}-1\right)$ and is full-rank with high probability. Let $\mathbf{u}_{n, 1}, \ldots, \mathbf{u}_{n, \mathsf{K}_{\mathrm{c}}-\frac{\mathsf{K}}{\mathsf{N}}\left(\mathsf{N}_{\mathrm{r}}-1\right)}$ be a vector basis for $\mathcal{N}\left(\left(\overline{\mathbf{F}_n}\right)^{T}\right)$. Thus for each $i\in\left[\left(\mathsf{K}_{\mathrm{c}}-\frac{\mathsf{K}}{\mathsf{N}}\left(\mathsf{N}_{\mathrm{r}}-1\right)\right)\right]$, consider the following linear combination of sub-messages 
\begin{IEEEeqnarray}{rCl} \label{local} 
\mathbf{u}_{n, i}^T \mathbf{F} \left[W_1; \ldots; W_{\mathsf{K}}\right]
=\mathbf{u}_{n, i}^T \left[F_1; \ldots; F_{\mathsf{K}_{\mathrm{c}}}\right],    
\end{IEEEeqnarray}  
since $\mathbf{u}_{n, i}^T \overline{\mathbf{F}_n}=\mathbf{0}$, the linear combination in \eqref{local} is independent of any sub-message in $\left\{W_k: k\in\overline{\mathcal{Z}_n}\right\}$, thus it can be computed by worker $n$.   

 Given that the local messages of the $\mathsf{N}$ workers aid in decoding the demanded linear combinations, each worker sends the \emph{minimum} number of coded symbols (fewer than that of the benchmark scheme) to the other workers. In particular, worker $n\in\left[\mathsf{N}\right]$ selects $\frac{\mathsf{K}}{\mathsf{N}}$ vectors $\mathbf{v}_{n, 1}, \ldots, \mathbf{v}_{n, \frac{\mathsf{K}}{\mathsf{N}}}$ uniformly at random from $\mathrm{span}\left\{\mathbf{u}_{n, 1}, \ldots, \mathbf{u}_{n, \mathsf{K}_{\mathrm{c}}-\frac{\mathsf{K}}{\mathsf{N}}\left(\mathsf{N}_{\mathrm{r}}-1\right)}\right\}$, and sends $\mathbf{v}_{n, i}^T \left[F_1; \ldots; F_{\mathsf{K}_{\mathrm{c}}}\right]$, $i=1, \ldots, \frac{\mathsf{K}}{\mathsf{N}}$ to all workers in $\left[\mathsf{N}\right]\setminus\left\{n\right\}$. 

\;\;\emph{Decoding Phase:} Let $\mathcal{A}=\left\{\mathcal{A}\left(1\right), \ldots, \mathcal{A}\left(\mathsf{N}_{\mathrm{r}}\right)\right\}$ denote the set of responding workers, where $\mathcal{A}\subseteq\left[\mathsf{N}\right]$, $\lvert\mathcal{A}\rvert=\mathsf{N}_{\mathrm{r}}$.  For each responding worker $\mathcal{A}\left(i\right)$, $i\in\left[\mathsf{N}_{\mathrm{r}}\right]$, the linear combinations of $F_1, \ldots,  F_{\mathsf{K}_{\mathrm{c}}}$ it receives from workers in $\mathcal{A}\setminus\left\{\mathcal{A}\left(i\right)\right\}$, together with those generated locally, are  
\begin{IEEEeqnarray}{rCl} \label{lcs}
&&\begin{bmatrix} 
\mathbf{v}_{\mathcal{A}\left(1\right), 1}^T \\ 
\vdots \\ 
\mathbf{v}_{\mathcal{A}\left(i-1\right), \frac{\mathsf{K}}{\mathsf{N}}}^T \\[5pt] 
\mathbf{u}_{\mathcal{A}\left(i\right), 1}^T \\ 
\vdots \\ 
\mathbf{u}_{\mathcal{A}\left(i\right), \mathsf{K}_{\mathrm{c}}-\frac{\mathsf{K}}{\mathsf{N}}\left(\mathsf{N}_{\mathrm{r}}-1\right)}^T \\[1pt]  
\mathbf{v}_{\mathcal{A}\left(i+1\right), 1}^T \\ 
\vdots \\ 
\mathbf{v}_{\mathcal{A}\left(\mathsf{N}_{\mathrm{r}}\right), \frac{\mathsf{K}}{\mathsf{N}}}^T   
\end{bmatrix} 
\begin{bmatrix} 
F_1 \\ 
\vdots \\ 
F_{\mathsf{K}_{\mathrm{c}}}   
\end{bmatrix} 
:=\mathbf{S}_{\mathcal{A}}^{\left(\mathcal{A}\left(i\right)\right)} 
\begin{bmatrix} 
F_1 \\ 
\vdots \\ 
F_{\mathsf{K}_{\mathrm{c}}}   
\end{bmatrix}.  
\end{IEEEeqnarray}  

We then present the following lemma, whose proof is in Appendix \ref{Sec:Correctness}, which is the most technical part of our work. 
\begin{Lemma} \label{fullrank}
For any set $\mathcal{A}$ of responding workers, the matrix \textnormal{$\mathbf{S}_{\mathcal{A}}^{\left(\mathcal{A}\left(i\right)\right)}$} is full-rank with high probability, for any \textnormal{$i\in\left[\mathsf{N}_{\mathrm{r}}\right]$}.  
\end{Lemma}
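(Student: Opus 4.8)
The plan is to treat $\det \mathbf{S}_{\mathcal{A}}^{\left(\mathcal{A}\left(i\right)\right)}$ as a polynomial in the random quantities of the scheme and invoke the Schwartz--Zippel lemma, so that it suffices to exhibit a single realization of these quantities for which the matrix is invertible. First I would reduce the ``for all $\mathcal{A}$ and all $i$'' statement to a fixed pair $\left(\mathcal{A}, i\right)$: there are at most $\binom{\mathsf{N}}{\mathsf{N}_{\mathrm{r}}}\mathsf{N}_{\mathrm{r}}$ such pairs, a number independent of $\mathsf{q}$, so if each matrix is singular with probability $O\left(1/\mathsf{q}\right)$ then a union bound keeps the overall failure probability $O\left(1/\mathsf{q}\right)\to 0$ as $\mathsf{q}\to\infty$. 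I would then record the dimension bookkeeping that makes $\mathbf{S}_{\mathcal{A}}^{\left(\mathcal{A}\left(i\right)\right)}$ square: its rows are the $d:=\mathsf{K}_{\mathrm{c}}-\frac{\mathsf{K}}{\mathsf{N}}\left(\mathsf{N}_{\mathrm{r}}-1\right)$ local vectors $\mathbf{u}_{\mathcal{A}\left(i\right), \cdot}$ together with $\frac{\mathsf{K}}{\mathsf{N}}$ vectors $\mathbf{v}_{\mathcal{A}\left(j\right), \cdot}$ from each of the $\mathsf{N}_{\mathrm{r}}-1$ remaining responding workers, for a total of exactly $d+\left(\mathsf{N}_{\mathrm{r}}-1\right)\frac{\mathsf{K}}{\mathsf{N}}=\mathsf{K}_{\mathrm{c}}$ rows in $\mathbb{F}_{\mathsf{q}}^{\mathsf{K}_{\mathrm{c}}}$.

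Second, I would set up the polynomial structure. Conditioning on $\mathbf{F}$ having full row rank $\mathsf{K}_{\mathrm{c}}$ and on each $\overline{\mathbf{F}_n}$ having full column rank $\frac{\mathsf{K}}{\mathsf{N}}\left(\mathsf{N}_{\mathrm{r}}-1\right)$ --- events that each hold with probability $1-O\left(1/\mathsf{q}\right)$ since the relevant minors are nonzero polynomials in the entries of $\mathbf{F}$ --- the spaces $\mathcal{N}((\overline{\mathbf{F}_n})^{T})$ are well defined of dimension $d$, and $d>\frac{\mathsf{K}}{\mathsf{N}}$ precisely because $\mathsf{K}_{\mathrm{c}}>\frac{\mathsf{K}}{\mathsf{N}}\mathsf{N}_{\mathrm{r}}$ in this regime. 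Writing each $\mathbf{v}_{n, \ell}=\sum_m c_{n, \ell, m}\mathbf{u}_{n, m}$, the entries of $\mathbf{S}_{\mathcal{A}}^{\left(\mathcal{A}\left(i\right)\right)}$ are linear in the random coefficients $c_{n, \ell, m}$, so $\det\mathbf{S}_{\mathcal{A}}^{\left(\mathcal{A}\left(i\right)\right)}$ is a polynomial in them. By Schwartz--Zippel, once this polynomial is shown to be not identically zero, the matrix is invertible with probability $1-O\left(1/\mathsf{q}\right)$, and the lemma follows after removing the conditioning and taking the union bound.

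Third --- and this is the crux --- I would prove the non-vanishing by exhibiting one full-rank configuration, which amounts to choosing, for each $j\neq i$, a $\frac{\mathsf{K}}{\mathsf{N}}$-dimensional subspace $V_j\subseteq \mathcal{N}((\overline{\mathbf{F}_{\mathcal{A}\left(j\right)}})^{T})$ with
\[
\mathcal{N}((\overline{\mathbf{F}_{\mathcal{A}\left(i\right)}})^{T})\oplus\bigoplus_{j\neq i} V_j=\mathbb{F}_{\mathsf{q}}^{\mathsf{K}_{\mathrm{c}}}.
\]
Here I would exploit the cyclic assignment: since $\mathbf{F}$ has full row rank, the map $\mathbf{w}\mapsto \mathbf{w}^T\mathbf{F}$ is an isomorphism from $\mathbb{F}_{\mathsf{q}}^{\mathsf{K}_{\mathrm{c}}}$ onto the row space of $\mathbf{F}$, under which $\mathcal{N}((\overline{\mathbf{F}_n})^{T})$ corresponds to the row-space vectors supported on $\mathcal{Z}_n$. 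Because the minimum-cost cyclic assignment places every dataset on $\mathsf{N}-\mathsf{N}_{\mathrm{r}}+1$ consecutive workers, any $\mathsf{N}_{\mathrm{r}}$ responding workers jointly cover all indices, i.e. $\cup_{n\in\mathcal{A}}\mathcal{Z}_n=\left[\mathsf{K}\right]$; I would use this covering property, together with $d>\frac{\mathsf{K}}{\mathsf{N}}$, to construct the required subspaces and certify the direct-sum decomposition.

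The main obstacle I anticipate is exactly this last step: by a standard matroid-union / Rado-type argument, generic choices of $\frac{\mathsf{K}}{\mathsf{N}}$ vectors from each $\mathcal{N}((\overline{\mathbf{F}_{\mathcal{A}\left(j\right)}})^{T})$, adjoined to the full block $\mathcal{N}((\overline{\mathbf{F}_{\mathcal{A}\left(i\right)}})^{T})$, achieve rank $\mathsf{K}_{\mathrm{c}}$ if and only if the submodular feasibility inequalities $\dim(\mathcal{N}((\overline{\mathbf{F}_{\mathcal{A}\left(i\right)}})^{T})+\sum_{j\in J}\mathcal{N}((\overline{\mathbf{F}_{\mathcal{A}\left(j\right)}})^{T}))\geq d+|J|\frac{\mathsf{K}}{\mathsf{N}}$ hold for every subset $J$ of the remaining responding workers. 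Verifying these rank inequalities from the cyclic support structure --- rather than from mere dimension counting, since the null spaces can overlap --- is the genuinely technical part. If a clean combinatorial argument proves elusive, an alternative is to pin down a single explicit structured demand matrix (e.g. a Vandermonde/MDS $\mathbf{F}$) for which the inequalities, and hence the direct-sum decomposition, can be checked directly; this alone suffices to certify that the determinant polynomial is nonzero.
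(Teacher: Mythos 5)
Your reduction is sound as far as it goes, and it parallels the paper's setup: conditioning on $\mathbf{F}$, writing $\mathbf{v}_{n,\ell}=\sum_m c_{n,\ell,m}\mathbf{u}_{n,m}$, invoking Schwartz--Zippel, and observing that it suffices to exhibit one choice of $\frac{\mathsf{K}}{\mathsf{N}}$-dimensional subspaces $V_j\subseteq\mathcal{N}\bigl(\bigl(\overline{\mathbf{F}_{\mathcal{A}\left(j\right)}}\bigr)^{T}\bigr)$ forming a direct sum with $\mathcal{N}\bigl(\bigl(\overline{\mathbf{F}_{\mathcal{A}\left(i\right)}}\bigr)^{T}\bigr)$. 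But the submodular rank inequalities you defer as ``the genuinely technical part'' are not a technical afterthought --- they are the entire mathematical content of the lemma, and your proposal stops exactly where the real work begins. Your condition $\dim\bigl(\mathcal{N}_i+\sum_{j\in J}\mathcal{N}_j\bigr)\geq d+\lvert J\rvert\frac{\mathsf{K}}{\mathsf{N}}$ is, with $l=\lvert J\rvert+1$, precisely Lemma~\ref{dimlowerbound} of the paper, equivalently $\dim\bigl(\mathcal{C}\bigl(\overline{\mathbf{F}_{\mathcal{A}\left(i_1\right)}}\bigr)\cap\cdots\cap\mathcal{C}\bigl(\overline{\mathbf{F}_{\mathcal{A}\left(i_l\right)}}\bigr)\bigr)\leq\frac{\mathsf{K}}{\mathsf{N}}\left(\mathsf{N}_{\mathrm{r}}-l\right)$. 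The paper devotes Appendix~\ref{dimlowerboundproof} to proving exactly this, by contradiction: if the intersection had dimension $h>\frac{\mathsf{K}}{\mathsf{N}}\left(\mathsf{N}_{\mathrm{r}}-l\right)$, restricting all matrices to an arbitrary set $\mathcal{B}$ of $\frac{\mathsf{K}}{\mathsf{N}}\mathsf{N}_{\mathrm{r}}$ rows preserves the same $h$ independent coefficient combinations, so the restricted null spaces span at most $\frac{\mathsf{K}}{\mathsf{N}}\mathsf{N}_{\mathrm{r}}-h<\frac{\mathsf{K}}{\mathsf{N}}l$ dimensions, contradicting \cite[Lemma 2]{m=1}, the centralized full-rank result for the cyclic assignment. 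Mere dimension counting or the covering property $\cup_{n\in\mathcal{A}}\mathcal{Z}_n=\left[\mathsf{K}\right]$, which is all you bring to bear, cannot substitute for this: the inequalities are a genericity statement about random $\mathbf{F}$ interacting with the cyclic index pattern, and that is where the cyclic structure actually gets used.

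Your fallback does not close the gap either. The conditioning is on the \emph{realized} random $\mathbf{F}$, so exhibiting a single Vandermonde/MDS $\mathbf{F}_0$ satisfying the inequalities says nothing about the determinant polynomial in the $c$'s for the $\mathbf{F}$ actually drawn. To turn a single-point certificate into a proof you would need a joint Schwartz--Zippel argument over the entries of $\mathbf{F}$ and the $c$'s, which requires the rows of $\mathbf{S}_{\mathcal{A}}^{\left(\mathcal{A}\left(i\right)\right)}$ to depend polynomially on $\mathbf{F}$; since the $\mathbf{u}_{n,i}$ are null-space bases, that demands an explicit Cramer-type minor parametrization you never supply. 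Moreover, verifying the intersection inequalities for an explicit structured $\mathbf{F}_0$ under the cyclic pattern is essentially the same problem as the one being dodged (it is what \cite[Lemma 2]{m=1} asserts for random $\mathbf{F}$), so nothing is gained. For what it is worth, \emph{granting} the inequalities, your matroid-union-plus-Schwartz--Zippel route would be a legitimate and arguably cleaner alternative to the paper's machinery, which instead combines Lemma~\ref{random} (random vectors in $\mathcal{S}\nsubseteq\mathcal{T}$ are whp independent and avoid $\mathcal{T}$) with the double induction of Propositions~\ref{proposition1} and~\ref{proposition2}; but as submitted, the proof has a hole precisely at its load-bearing step.
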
 
Thus, by Lemma \ref{fullrank}, worker $\mathcal{A}\left(i\right)$, $i\in\left[\mathsf{N}_{\mathrm{r}}\right]$ can decode the desired linear combinations by computing $\left(\mathbf{S}_{\mathcal{A}}^{\left(\mathcal{A}\left(i\right)\right)}\right)^{-1}\mathbf{S}_{\mathcal{A}}^{\left(\mathcal{A}\left(i\right)\right)}\left[F_1; \ldots; F_{\mathsf{K}_{\mathrm{c}}}\right]$. 

Moreover, for each non-responding worker $j\in\left[\mathsf{N}\right]\setminus\mathcal{A}$, it can decode $F_1, \ldots,  F_{\mathsf{K}_{\mathrm{c}}}$ with its local content and the transmissions from any $\mathsf{N}_{\mathrm{r}}-1$ responding workers $\mathcal{A}\left(i_1\right), \ldots, \mathcal{A}\left(i_{\mathsf{N}_{\mathrm{r}}-1}\right)\in\mathcal{A}$. In particular, define $\mathcal{A}':=\left\{\mathcal{A}\left(i_1\right), \ldots, \mathcal{A}\left(i_{\mathsf{N}_{\mathrm{r}}-1}\right), j\right\}$, and consider the following $\mathsf{K}_{\mathrm{c}}$ linear combinations of $F_1, \ldots,  F_{\mathsf{K}_{\mathrm{c}}}$, 
\begin{IEEEeqnarray}{rCl} \label{lcs}
&&\begin{bmatrix} 
\mathbf{v}_{\mathcal{A}\left(i_1\right), 1}^T \\ 
\vdots \\ 
\mathbf{v}_{\mathcal{A}\left(i_1\right), \frac{\mathsf{K}}{\mathsf{N}}}^T \\[5pt] 
\mathbf{v}_{\mathcal{A}\left(i_2\right), 1}^T \\ 
\vdots \\ 
\mathbf{v}_{\mathcal{A}\left(i_{\mathsf{N}_{\mathrm{r}}-1}\right), \frac{\mathsf{K}}{\mathsf{N}}}^T \\[6pt]          
\mathbf{u}_{j, 1}^T \\ 
\vdots \\ 
\mathbf{u}_{j, \mathsf{K}_{\mathrm{c}}-\frac{\mathsf{K}}{\mathsf{N}}\left(\mathsf{N}_{\mathrm{r}}-1\right)}^T \\   
\end{bmatrix} 
\begin{bmatrix} 
F_1 \\ 
\vdots \\ 
F_{\mathsf{K}_{\mathrm{c}}}   
\end{bmatrix} 
:=\mathbf{S}_{\mathcal{A}'}^{\left(j\right)} 
\begin{bmatrix} 
F_1 \\ 
\vdots \\ 
F_{\mathsf{K}_{\mathrm{c}}}  
\end{bmatrix}.  
\end{IEEEeqnarray}  
If the set of responding workers were $\mathcal{A}'$, by Lemma \ref{fullrank}, the matrix $\mathbf{S}_{\mathcal{A}'}^{\left(j\right)}$ is full-rank with high probability. Thus, each non-responding worker $j\in\left[\mathsf{N}\right]\setminus\mathcal{A}$ can decode the desired linear combinations by computing $\left(\mathbf{S}_{\mathcal{A}'}^{\left(j\right)}\right)^{-1}\mathbf{S}_{\mathcal{A}'}^{\left(j\right)}\left[F_1; \ldots; F_{\mathsf{K}_{\mathrm{c}}}\right]$. 

The decoding complexity (i.e., the number of multiplications) of each worker $n\in\left[\mathsf{N}\right]$ is $\mathcal{O}\left(\mathsf{K}_{\mathrm{c}}^2\mathsf{L}\right)$.   

\;\;\emph{Performance:} Since each worker sends $\frac{\mathsf{K}}{\mathsf{N}}$ linear combinations of the sub-messages, 
each with a length of $\mathsf{L}$, 
the required communication cost is $\frac{\mathsf{K}}{\mathsf{N}}\mathsf{N}_{\mathrm{r}}$, which coincides with the converse bound.  If we directly apply the benchmark scheme, the communication cost will be $\mathsf{K}_{\mathrm{c}}$, which is strictly larger than $\frac{\mathsf{K}}{\mathsf{N}}\mathsf{N}_{\mathrm{r}}$. Thus our  scheme can achieve a better performance.



%

\section{Conclusion} 
\label{Sec:Conclusion} 
In this paper, we addressed the distributed linearly separable computation problem within a fully decentralized framework, focusing on minimizing the communication cost when the computation cost is minimum. Our proposed novel distributed computing scheme effectively leverages locally computed messages from each worker, achieving optimal communication cost under the cyclic assignment. Future works encompass exploring the optimal tradeoff between computation and communication costs for this problem, and extending to the more general scenario where each worker aims to compute different linear combinations of messages.



\begin{thebibliography}{9} 

\bibitem{AWS} E. Amazon. (Nov. 2015). \textit{Amazon Web Services}. [Online]. Available: http://aws.amazon.com/es/ec2/
 
\bibitem{Azure} B. Wilder, \textit{Cloud Architecture Patterns: Using Microsoft Azure}. Newton, MA, USA: O’Reilly Media, 2012.

\bibitem{Cloud} E. Bisong, \textit{Building machine learning and deep learning models on Google cloud platform: A comprehensive guide for beginners}. Apress, 2019.

\bibitem{surveyCDC}	
		 J. S. Ng, W. Y. B. Lim, N. C. Luong, Z. Xiong, A. Asheralieva, D. Niyato, C. Leung, and C. Miao, ``A comprehensive survey on coded distributed computing: Fundamentals, challenges, and networking applications," \emph{IEEE Commun. Surveys Tutorials}, vol. 23, no. 3, pp. 1800-1837, 2021. 

\bibitem{CDC} S. Li, M. A. Maddah-Ali, Q. Yu, and A. S. Avestimehr, ``A fundamental tradeoff between computation and communication in distributed computing,'' \textit{IEEE Trans. Inf. Theory}, vol. 64, no. 1, pp. 109-128, Jan. 2018. 

\bibitem{Lee'18} K. Lee, M. Lam, R. Pedarsani, D. Papailiopoulos and K. Ramchandran, ``Speeding up distributed machine learning using codes,'' \textit{IEEE Trans. Inf. Theory}, vol. 64, no. 3, pp. 1514-1529, March 2018. 

\bibitem{m=1} 
K. Wan, H. Sun, M. Ji, and G. Caire, ``Distributed linearly separable computation,'' \textit{IEEE Trans. Inf. Theory}, vol. 68, no. 2, pp. 1259-1278, 2021.

\bibitem{m>1TCOM}
——, ``On the tradeoff between computation and communication costs for distributed linearly separable computation,'' \textit{IEEE Trans. Commun.}, vol. 69, no. 11, pp. 7390-7405, 2021.

\bibitem{m>1}
W. Huang, K. Wan, H. Sun, M. Ji, R. C. Qiu and G. Caire, ``Fundamental limits of distributed linearly separable computation under cyclic assignment,'' in \textit{Proc. IEEE Int. Symp. Inf. Theory (ISIT)}, 2023, pp. 2296-2301.

\bibitem{secure}
K. Wan, H. Sun, M. Ji, and G. Caire, ``On secure distributed linearly separable computation,'' \textit{IEEE J. Sel.  Areas Commun.}, vol. 40, no. 3, pp. 912-926, Mar. 2022. 

\bibitem{GC} 
R. Tandon, Q. Lei, A. G. Dimakis, and N. Karampatziakis, ``Gradient coding: Avoiding stragglers in distributed learning,'' in \textit{Proc. Int.  Conf.  Mach. Learn. (ICML)}, Aug 2017, pp. 3368-3376. 

\bibitem{Raviv'18} 
N. Raviv, R. Tandon, A. Dimakis, and I. Tamo, ``Gradient coding from cyclic MDS codes and expander graphs,'' in \textit{Proc. Int. Conf. Mach. Learn. (ICML)}, Jul. 2018, pp. 4302-4310. 

\bibitem{Halbawi'18} 
W. Halbawi, N. Azizan, F. Salehi, and B. Hassibi, ``Improving distributed gradient descent using Reed-Solomon codes,'' in \textit{Proc. IEEE Int. Symp. Inf. Theory (ISIT)}, Jun. 2018, pp. 2027-2031. 

\bibitem{Xu'21} 
J. Xu, S.-L. Huang, L. Song, and T. Lan, ``Live gradient compensation for evading stragglers in distributed learning,'' in \textit{Proc. IEEE Conf. Comput. Commun.}, May 2021, pp. 3368-3376. 

\bibitem{Ye'18} 
M. Ye and E. Abbe, ``Communication-computation efficient gradient coding,'' in \textit{Proc. Int. Conf. Mach. Learn.}, 2018, pp. 5610-5619. 

\bibitem{AdaptiveGC} 
H. Cao, Q. Yan, X. Tang and G. Han, ``Adaptive gradient coding,'' \textit{IEEE/ACM Trans. Netw.}, vol. 30, no. 2, pp. 717-734, 2022. 

\bibitem{Short-Dot} 
S. Dutta, V. Cadambe, and P. Grover, ``‘Short-Dot’: Computing large linear transforms distributedly using coded short dot products,'' \textit{IEEE Trans. Inf. Theory}, vol. 65, no. 10, pp. 6171-6193, Jul. 2019. 

\bibitem{Moller'19} 
T. Akenine-Moller, E. Haines, and N. Hoffman, \textit{Real-time rendering}. AK Peters/crc Press, 2019.


\bibitem{Maddah-Ali'23} N. A. Khooshemehr and M.A. Maddah-Ali, ``Vers: fully distributed coded computing system with distributed encoding,'' 2023, \textit{arXiv: 2304.05691}. 

\bibitem{Lian'17} X. Lian, C. Zhang, H. Zhang, C.-J. Hsieh, W. Zhang, and J. Liu, ``Can decentralized algorithms outperform centralized algorithms? A case study for decentralized parallel stochastic gradient descent,'' in \textit{Proc. Adv. Neural Inf. Process. Syst. (NeurIPS)}, 2017, pp. 5330-5340. 

\bibitem{Jeong'20} H. Jeong, ``Fully-decentralized coded computing for reliable large-scale computing,'' Ph.D. dissertation, Carnegie Mellon University, 2020. 




















\end{thebibliography}

\appendices 

\section{Proof of Lemma \ref{fullrank}}
\label{Sec:Correctness}
We first present the following $2$ lemmas. 
\begin{Lemma} \label{random}
Let $\mathcal{S}$ and $\mathcal{T}$ be subspaces of \textnormal{$\mathbb{F}_{\mathsf{q}}^r$} for sufficiently large \textnormal{$\mathsf{q}$} and $\mathcal{S}\nsubseteq\mathcal{T}$. 
For \textnormal{$s\in\left[\dim\mathcal{S}\right]$}, Choose $s$ vectors $\mathbf{p}_1, \ldots, \mathbf{p}_s$ uniformly at random from $\mathcal{S}$, then with high probability 
\begin{enumerate} 
\item $\mathbf{p}_i\not\in\mathcal{T}$ for any $i\in\left[s\right]$; 
\item $\mathbf{p}_1, \ldots, \mathbf{p}_s$ are linearly independent.  
\end{enumerate}
\end{Lemma}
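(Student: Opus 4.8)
The plan is to reduce both claims to a single elementary counting fact: if $\mathcal{W}$ is a proper subspace of $\mathcal{S}$, then a vector drawn uniformly at random from $\mathcal{S}$ lies in $\mathcal{W}$ with probability $\lvert\mathcal{W}\rvert/\lvert\mathcal{S}\rvert=\mathsf{q}^{\dim\mathcal{W}-\dim\mathcal{S}}\leq\mathsf{q}^{-1}$, which vanishes as $\mathsf{q}\to\infty$. Every ``bad event'' that must be ruled out is precisely the event that some random draw falls into a proper subspace, so both conclusions follow by instantiating $\mathcal{W}$ appropriately and taking a union bound over the $s$ draws.

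For part~1), I would take $\mathcal{W}=\mathcal{S}\cap\mathcal{T}$. The hypothesis $\mathcal{S}\nsubseteq\mathcal{T}$ guarantees this intersection is a proper subspace of $\mathcal{S}$, so each $\mathbf{p}_i$ lies in $\mathcal{T}$ (equivalently in $\mathcal{S}\cap\mathcal{T}$, since $\mathbf{p}_i\in\mathcal{S}$) with probability at most $\mathsf{q}^{-1}$; a union bound over $i\in\left[s\right]$ then bounds the probability that some $\mathbf{p}_i\in\mathcal{T}$ by $s\mathsf{q}^{-1}\to0$. For part~2), I would argue sequentially: conditioned on $\mathbf{p}_1,\ldots,\mathbf{p}_{j-1}$ being linearly independent, their span has dimension $j-1$, and since $j\leq s\leq\dim\mathcal{S}$ this span is a proper subspace of $\mathcal{S}$, so $\mathbf{p}_j$ falls into it with probability at most $\mathsf{q}^{-1}$ regardless of the earlier draws. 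Summing the failure probabilities over $j\in\left[s\right]$ (the $j=1$ term being $\Pr\{\mathbf{p}_1=\mathbf{0}\}\leq\mathsf{q}^{-1}$, using $\dim\mathcal{S}\geq1$) bounds the probability of linear dependence by $s\mathsf{q}^{-1}\to0$.

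I do not anticipate a serious obstacle here; the only points requiring care are verifying that in each invocation the relevant subspace is genuinely proper — in part~1) this is exactly the hypothesis $\mathcal{S}\nsubseteq\mathcal{T}$, and in part~2) it is ensured by $s\leq\dim\mathcal{S}$ — and making the conditioning in part~2) rigorous so that the per-step bound $\mathsf{q}^{-1}$ holds uniformly over all outcomes of the preceding draws. The ``with high probability'' conclusion then means precisely that the total failure probability $O(s/\mathsf{q})$ tends to $0$ as $\mathsf{q}\to\infty$, consistent with the asymptotic regime used throughout the paper.
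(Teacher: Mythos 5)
Your proposal is correct and follows essentially the same argument as the paper: both reduce everything to counting points of proper subspaces of $\mathcal{S}$ over $\mathbb{F}_{\mathsf{q}}$ and bound the failure probability by $O(s/\mathsf{q})\to0$, with the paper merely folding the two requirements into a single sequential pass (demanding $\mathbf{p}_i\notin\mathcal{T}\cup\mathrm{span}\{\mathbf{p}_1,\ldots,\mathbf{p}_{i-1}\}$ at each step, giving $(1-2/\mathsf{q})^s$) where you run two union bounds. No gap; your care about properness ($\mathcal{S}\nsubseteq\mathcal{T}$ and $s\leq\dim\mathcal{S}$) matches exactly what the paper's counting relies on.
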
 

The proof of Lemma \ref{random} will be given in Appendix \ref{randomproof}. 

\begin{Lemma} \label{dimlowerbound}
For any \textnormal{$l\in\left[\mathsf{N}_{\mathrm{r}}\right]$} responding workers $\mathcal{A}\left(i_1\right), \ldots, \mathcal{A}\left(i_{l}\right)\in\mathcal{A}$, with high probability we have 
\textnormal{ 
\begin{align}
&\dim\mathrm{span}\left\{\mathbf{u}_{\mathcal{A}\left(i_{1}\right), 1}, \ldots, \mathbf{u}_{\mathcal{A}\left(i_{l}\right), \mathsf{K}_{\mathrm{c}}-\frac{\mathsf{K}}{\mathsf{N}}\left(\mathsf{N}_{\mathrm{r}}-1\right)}\right\} \nonumber 
\\&\geq\mathsf{K}_{\mathrm{c}}-\frac{\mathsf{K}}{\mathsf{N}}\left(\mathsf{N}_{\mathrm{r}}-l\right).  
\end{align} 
}
\end{Lemma}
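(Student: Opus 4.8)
The plan is to translate the statement into the dual language of the row space of $\mathbf{F}$, where each $\mathcal{N}((\overline{\mathbf{F}_n})^T)$ becomes the intersection of a single generic subspace with a coordinate subspace, and then to run a greedy, one-worker-at-a-time dimension count driven by the interval structure of the cyclic assignment. First I would set up the reformulation. Since $\mathbf{F}$ has full row rank with high probability, the map $\mathbf{u}\mapsto\mathbf{u}^T\mathbf{F}$ injects $\mathbb{F}_{\mathsf{q}}^{\mathsf{K}_{\mathrm{c}}}$ onto the row space $\mathcal{R}:=\mathcal{C}(\mathbf{F}^T)\subseteq\mathbb{F}_{\mathsf{q}}^{\mathsf{K}}$. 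Writing $E_n:=\{\mathbf{x}\in\mathbb{F}_{\mathsf{q}}^{\mathsf{K}}:x_k=0\ \forall k\in\overline{\mathcal{Z}_n}\}$ for the coordinate subspace supported on $\mathcal{Z}_n$, one checks that this map sends $\mathcal{N}((\overline{\mathbf{F}_n})^T)=\mathrm{span}\{\mathbf{u}_{n,1},\ldots\}$ isomorphically onto $\mathcal{R}\cap E_n$ and carries sums to sums, so the claim is equivalent to $\dim\sum_{j=1}^{l}(\mathcal{R}\cap E_{\mathcal{A}(i_j)})\ge\mathsf{K}_{\mathrm{c}}-\frac{\mathsf{K}}{\mathsf{N}}(\mathsf{N}_{\mathrm{r}}-l)$.

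Next I would record two facts. First, a genericity/transversality statement: because the entries of $\mathbf{F}$ are i.i.d.\ uniform, for large $\mathsf{q}$ we have, with high probability and simultaneously for all of the finitely many index sets $S$ that arise, $\dim(\mathcal{R}\cap\mathbb{F}_{\mathsf{q}}^{S})=\max(0,\mathsf{K}_{\mathrm{c}}+|S|-\mathsf{K})$; this is a Schwartz--Zippel statement, since the relevant maximal minor of the random submatrix $\mathbf{F}_{\overline{S}}$ is a nonzero polynomial in the entries of $\mathbf{F}$. Second, an elementary lattice fact about coordinate subspaces: $E_m\cap\sum_{j<m}E_j=\mathbb{F}_{\mathsf{q}}^{\mathcal{Z}_m\cap\bigcup_{j<m}\mathcal{Z}_j}$.

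The engine of the proof is then to add the summands one at a time and lower-bound each increment. Abbreviating $\mathcal{Z}_m:=\mathcal{Z}_{\mathcal{A}(i_m)}$ and $E_m:=E_{\mathcal{A}(i_m)}$, I would use the containment $(\mathcal{R}\cap E_m)\cap\sum_{j<m}(\mathcal{R}\cap E_j)\subseteq\mathcal{R}\cap\big(E_m\cap\sum_{j<m}E_j\big)=\mathcal{R}\cap\mathbb{F}_{\mathsf{q}}^{\mathcal{Z}_m\cap\bigcup_{j<m}\mathcal{Z}_j}$, which together with the transversality formula and the identity $A-\max(0,A-b)=\min(A,b)$ shows that the dimension increment at step $m$ is at least $\min\!\big(\mathsf{K}_{\mathrm{c}}-\tfrac{\mathsf{K}}{\mathsf{N}}(\mathsf{N}_{\mathrm{r}}-1),\,b_m\big)$, where $b_m:=|\mathcal{Z}_m\setminus\bigcup_{j<m}\mathcal{Z}_j|$ counts the datasets newly covered at step $m$ (here $|\mathcal{Z}_m|=\mathsf{M}$ and $\dim(\mathcal{R}\cap E_m)=\mathsf{K}_{\mathrm{c}}-\tfrac{\mathsf{K}}{\mathsf{N}}(\mathsf{N}_{\mathrm{r}}-1)$). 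Summing yields $\dim\sum_j(\mathcal{R}\cap E_j)\ge\sum_{m=1}^{l}\min\!\big(\mathsf{K}_{\mathrm{c}}-\tfrac{\mathsf{K}}{\mathsf{N}}(\mathsf{N}_{\mathrm{r}}-1),\,b_m\big)$, and since this dimension does not depend on the ordering of the workers, I am free to choose the most convenient one. For the combinatorial finish, I would exploit that under the cyclic assignment each worker's datasets form, within every length-$\mathsf{N}$ block, an interval of $\mathsf{N}-\mathsf{N}_{\mathrm{r}}+1$ consecutive positions, with the pattern repeating over $\tfrac{\mathsf{K}}{\mathsf{N}}$ blocks. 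Ordering the $l$ distinct workers by starting position, cut open at the largest cyclic gap; then $b_1=\mathsf{M}$ (so its capped increment is exactly $\mathsf{K}_{\mathrm{c}}-\tfrac{\mathsf{K}}{\mathsf{N}}(\mathsf{N}_{\mathrm{r}}-1)$), and each later interval advances the coverage by at least one position per block, giving $b_m\ge\tfrac{\mathsf{K}}{\mathsf{N}}$ for $m\ge2$. As $\mathsf{K}_{\mathrm{c}}-\tfrac{\mathsf{K}}{\mathsf{N}}(\mathsf{N}_{\mathrm{r}}-1)\ge\tfrac{\mathsf{K}}{\mathsf{N}}$ (equivalently $\mathsf{K}_{\mathrm{c}}\ge\tfrac{\mathsf{K}}{\mathsf{N}}\mathsf{N}_{\mathrm{r}}$), every such term contributes at least $\tfrac{\mathsf{K}}{\mathsf{N}}$, and the sum telescopes to exactly $\mathsf{K}_{\mathrm{c}}-\tfrac{\mathsf{K}}{\mathsf{N}}(\mathsf{N}_{\mathrm{r}}-l)$; a union bound over the finitely many worker subsets closes the argument.

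I expect the two delicate points to be (a) making the simultaneous transversality precise, namely that one generic $\mathcal{R}$ meets every relevant coordinate subspace in the expected dimension, handled by a Schwartz--Zippel union bound over finitely many minors; and (b) the worst-case ordering and wrap-around bookkeeping in the final step, i.e.\ verifying that cutting at the largest gap really forces $b_m\ge\tfrac{\mathsf{K}}{\mathsf{N}}$ for every later worker. The incremental containment in the third paragraph is the conceptual crux that makes the entire reduction go through; without passing to the row space and the coordinate-subspace lattice, the $l$-fold intersection of column spaces does not obey inclusion--exclusion and the bound is much harder to control.
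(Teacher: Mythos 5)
Your dual reformulation and the Schwartz--Zippel transversality fact are both correct, but the engine of your argument---the incremental containment $(\mathcal{R}\cap E_m)\cap\sum_{j<m}(\mathcal{R}\cap E_j)\subseteq\mathcal{R}\cap\mathbb{F}_{\mathsf{q}}^{\mathcal{Z}_m\cap\bigcup_{j<m}\mathcal{Z}_j}$---is too lossy to prove the lemma. The reason is that $\sum_{j<m}(\mathcal{R}\cap E_j)$ is in general a \emph{proper} subspace of $\mathcal{R}\cap\sum_{j<m}E_j$, and the content of the lemma is precisely the extra generic transversality, inside $\mathcal{R}$, that this containment discards. A concrete failure: take $\mathsf{K}=12$, $\mathsf{N}=6$, $\mathsf{N}_{\mathrm{r}}=3$, $\mathsf{K}_{\mathrm{c}}=7\in\left(\frac{\mathsf{K}}{\mathsf{N}}\mathsf{N}_{\mathrm{r}}:\mathsf{K}\right]$, and the responding workers $\{1,3,5\}$ with $l=3$. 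Per block, the three length-$4$ cyclic intervals starting at $1,3,5$ have the property that \emph{any two} of them already cover all of $\mathbb{Z}_6$, so under \emph{every} ordering the new-coverage counts are $(b_1,b_2,b_3)=(8,4,0)$; with $A:=\mathsf{K}_{\mathrm{c}}-\frac{\mathsf{K}}{\mathsf{N}}(\mathsf{N}_{\mathrm{r}}-1)=3$, your bound gives $\min(3,8)+\min(3,4)+\min(3,0)=6$, strictly below the target $\mathsf{K}_{\mathrm{c}}-\frac{\mathsf{K}}{\mathsf{N}}(\mathsf{N}_{\mathrm{r}}-l)=7$. Yet the lemma is true here: $\overline{\mathcal{Z}_1}=\{5,6,11,12\}$, $\overline{\mathcal{Z}_3}=\{1,2,7,8\}$, $\overline{\mathcal{Z}_5}=\{3,4,9,10\}$ are disjoint, so $\mathcal{C}(\overline{\mathbf{F}_1})$, $\mathcal{C}(\overline{\mathbf{F}_3})$, $\mathcal{C}(\overline{\mathbf{F}_5})$ are spanned by disjoint sets of four i.i.d.\ random columns in $\mathbb{F}_{\mathsf{q}}^{7}$ and their triple intersection is trivial with high probability, giving dimension exactly $7$. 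What your bound misses is visible at the third step: the partial sum $(\mathcal{R}\cap E_1)+(\mathcal{R}\cap E_3)$ has dimension $6$ and sits generically inside the $7$-dimensional $\mathcal{R}$, so it meets the $3$-dimensional $\mathcal{R}\cap E_5$ in dimension $2$ rather than $3$, and the true increment is $1$, not $0$. The same example also falsifies your combinatorial sub-claim: cutting at the largest cyclic gap does not force $b_m\geq\frac{\mathsf{K}}{\mathsf{N}}$ for $m\geq 2$ (here all gaps are equal and the last worker always contributes $0$), and in general a configuration can have every interval contained in the union of the others, so no ordering works.

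The paper's proof takes a route that retains exactly the genericity your lattice bound throws away. It first observes that the claim is equivalent to $\dim\left(\mathcal{C}(\overline{\mathbf{F}_{\mathcal{A}(i_1)}})\cap\cdots\cap\mathcal{C}(\overline{\mathbf{F}_{\mathcal{A}(i_l)}})\right)\leq\frac{\mathsf{K}}{\mathsf{N}}(\mathsf{N}_{\mathrm{r}}-l)$, then argues by contradiction: if the intersection had dimension $h>\frac{\mathsf{K}}{\mathsf{N}}(\mathsf{N}_{\mathrm{r}}-l)$, restricting every $\overline{\mathbf{F}_{\mathcal{A}(i_j)}}$ to an arbitrary row subset $\mathcal{B}$ of size $\frac{\mathsf{K}}{\mathsf{N}}\mathsf{N}_{\mathrm{r}}$ cannot decrease this intersection dimension, whence the corresponding null spaces of the restricted matrices span at most $\frac{\mathsf{K}}{\mathsf{N}}\mathsf{N}_{\mathrm{r}}-h<\frac{\mathsf{K}}{\mathsf{N}}l$ dimensions; this contradicts Lemma~2 of \cite{m=1}, which asserts that in the square case $\mathsf{K}_{\mathrm{c}}=\frac{\mathsf{K}}{\mathsf{N}}\mathsf{N}_{\mathrm{r}}$ that span has dimension exactly $\frac{\mathsf{K}}{\mathsf{N}}l$ with high probability under the cyclic assignment. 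That cited lemma is where the interaction between the cyclic structure and the randomness of $\mathbf{F}$ is actually established, and it is not recoverable from the coordinate-subspace lattice alone. If you wish to salvage your incremental framing, you would need to replace the containment step by a genuine probabilistic estimate showing that each new $\mathcal{R}\cap E_m$ meets the previously accumulated sum transversally \emph{within} $\mathcal{R}$---which is essentially the statement you were trying to avoid proving.
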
 

The proof of Lemma \ref{dimlowerbound} will be given in Appendix \ref{dimlowerboundproof}. 

Next, we prove the following proposition by induction. 
\begin{Proposition} \label{proposition1}
For arbitrary \textnormal{$a\in\left[\mathsf{N}_{\mathrm{r}}-1\right]$} and any \textnormal{$b\in\left(a:\mathsf{N}_{\mathrm{r}}\right]$} responding workers $\mathcal{A}\left(i_1\right), \ldots, \mathcal{A}\left(i_{b}\right)\in\mathcal{A}$, with high probability we have 
\textnormal{ 
\begin{align} \label{prop1} 
&\dim\mathrm{span}\left\{\mathbf{u}_{\mathcal{A}\left(i_1\right), 1}, \ldots, \mathbf{u}_{\mathcal{A}\left(i_a\right), \mathsf{K}_{\mathrm{c}}-\frac{\mathsf{K}}{\mathsf{N}}\left(\mathsf{N}_{\mathrm{r}}-1\right)}, 
\notag\right. 
\\&
\phantom{=\;\;}
\left.\;\;\;\;\;\;\mathbf{v}_{\mathcal{A}\left(i_{a+1}\right), 1}, \ldots, \mathbf{v}_{\mathcal{A}\left(i_b\right), \frac{\mathsf{K}}{\mathsf{N}}}\right\}\geq\mathsf{K}_{\mathrm{c}}-\frac{\mathsf{K}}{\mathsf{N}}\left(\mathsf{N}_{\mathrm{r}}-b\right).   
\end{align} 
} 
\end{Proposition}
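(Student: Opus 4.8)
The plan is to prove Proposition \ref{proposition1} by induction on $m := b-a$, the number of workers that contribute their $\frac{\mathsf{K}}{\mathsf{N}}$ randomly chosen vectors $\mathbf{v}$ rather than a full null-space basis. For a worker $n$ write $\mathcal{U}_n := \mathcal{N}((\overline{\mathbf{F}_n})^{T}) = \mathrm{span}\{\mathbf{u}_{n,1},\ldots,\mathbf{u}_{n,\mathsf{K}_{\mathrm{c}}-\frac{\mathsf{K}}{\mathsf{N}}(\mathsf{N}_{\mathrm{r}}-1)}\}\subseteq\mathbb{F}_{\mathsf{q}}^{\mathsf{K}_{\mathrm{c}}}$, and recall that the $\mathbf{v}_{n,j}$ are drawn uniformly at random from $\mathcal{U}_n$. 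With this notation, Lemma \ref{dimlowerbound} is precisely the degenerate case $m=0$ of the proposition (only $\mathbf{u}$-blocks present), and it will close the recursion.

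The single device driving the induction is a \emph{maximal-growth} rule that I would extract from Lemma \ref{random}: for any fixed subspace $\mathcal{V}$ and any subspace $\mathcal{S}$, if $\mathbf{v}_1,\ldots,\mathbf{v}_{\mathsf{K}/\mathsf{N}}$ are drawn uniformly at random from $\mathcal{S}$, then with high probability $\dim(\mathcal{V}+\mathrm{span}\{\mathbf{v}_1,\ldots,\mathbf{v}_{\mathsf{K}/\mathsf{N}}\}) = \min\{\dim\mathcal{V}+\frac{\mathsf{K}}{\mathsf{N}},\ \dim(\mathcal{V}+\mathcal{S})\}$. To establish this I would pass to the quotient $\mathcal{S}/(\mathcal{S}\cap\mathcal{V})$, whose dimension equals $\dim(\mathcal{V}+\mathcal{S})-\dim\mathcal{V}$: the images of the $\mathbf{v}_j$ are uniform and mutually independent there, so applying the linear-independence part of Lemma \ref{random} in this quotient shows that $\min\{\frac{\mathsf{K}}{\mathsf{N}},\ \dim(\mathcal{V}+\mathcal{S})-\dim\mathcal{V}\}$ of them are independent modulo $\mathcal{V}$ with high probability. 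This is the step I expect to be the crux, since it is exactly where the randomness of the transmitted combinations is converted into a dimension count, and where the union bound must be organized so that a single high-probability event covers the reuse of the same vectors across the induction.

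For the inductive step, fix $(a,b)$ with $b-a=m\ge1$ and let $\mathcal{V}$ be the span in \eqref{prop1} with the last $\mathbf{v}$-block removed, i.e.\ the span of the $\mathbf{u}$-bases of $\mathcal{A}(i_1),\ldots,\mathcal{A}(i_a)$ together with the $\mathbf{v}$-blocks of $\mathcal{A}(i_{a+1}),\ldots,\mathcal{A}(i_{b-1})$. The target span is $\mathcal{V}+\mathrm{span}\{\mathbf{v}_{\mathcal{A}(i_b),1},\ldots,\mathbf{v}_{\mathcal{A}(i_b),\frac{\mathsf{K}}{\mathsf{N}}}\}$ with $\mathcal{S}=\mathcal{U}_{\mathcal{A}(i_b)}$, so by the maximal-growth rule its dimension is $\min\{\dim\mathcal{V}+\frac{\mathsf{K}}{\mathsf{N}},\ \dim(\mathcal{V}+\mathcal{U}_{\mathcal{A}(i_b)})\}$ with high probability. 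I would then bound the two terms separately by the induction hypothesis at level $m-1$ (which is Lemma \ref{dimlowerbound} when $m=1$). The first term uses the pair $(a,b-1)$, giving $\dim\mathcal{V}\ge \mathsf{K}_{\mathrm{c}}-\frac{\mathsf{K}}{\mathsf{N}}(\mathsf{N}_{\mathrm{r}}-b)-\frac{\mathsf{K}}{\mathsf{N}}$ and hence $\dim\mathcal{V}+\frac{\mathsf{K}}{\mathsf{N}}\ge \mathsf{K}_{\mathrm{c}}-\frac{\mathsf{K}}{\mathsf{N}}(\mathsf{N}_{\mathrm{r}}-b)$. The second term uses the observation that $\mathcal{V}+\mathcal{U}_{\mathcal{A}(i_b)}$ is itself a span of the proposition's form, now with the $a{+}1$ $\mathbf{u}$-blocks of $\mathcal{A}(i_1),\ldots,\mathcal{A}(i_a),\mathcal{A}(i_b)$ and the $m-1$ $\mathbf{v}$-blocks of $\mathcal{A}(i_{a+1}),\ldots,\mathcal{A}(i_{b-1})$, so the pair $(a+1,b)$ gives $\dim(\mathcal{V}+\mathcal{U}_{\mathcal{A}(i_b)})\ge \mathsf{K}_{\mathrm{c}}-\frac{\mathsf{K}}{\mathsf{N}}(\mathsf{N}_{\mathrm{r}}-b)$. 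Since both terms of the minimum meet the target $\mathsf{K}_{\mathrm{c}}-\frac{\mathsf{K}}{\mathsf{N}}(\mathsf{N}_{\mathrm{r}}-b)$, so does the dimension of the target span, completing the step; both recursive calls land at level $m-1$, so the induction is well founded.

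Finally, I would record the payoff: taking $a=1$ and $b=\mathsf{N}_{\mathrm{r}}$ (relabelling so the locally decoding worker $\mathcal{A}(i)$ is placed first) makes the span in \eqref{prop1} equal to the row space of $\mathbf{S}_{\mathcal{A}}^{(\mathcal{A}(i))}$, and the bound reads $\ge\mathsf{K}_{\mathrm{c}}$; as this span sits inside $\mathbb{F}_{\mathsf{q}}^{\mathsf{K}_{\mathrm{c}}}$, the matrix has full rank, yielding Lemma \ref{fullrank}. The main obstacle, as noted, is the maximal-growth rule together with the bookkeeping of the high-probability qualifiers when the same random vectors $\mathbf{v}_{n,j}$ are reused in the two induction branches; since all relevant events are finite in number and $\mathsf{q}$-independent, each failing with probability $O(1/\mathsf{q})$, a single union bound over all $(\mathcal{A},i,a,b)$ and all intermediate additions keeps the total failure probability vanishing as $\mathsf{q}\to\infty$.
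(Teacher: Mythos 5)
Your proof is correct, and it takes a genuinely different route from the paper's. The paper runs a double induction: an outer backward induction on $a$ (Proposition~\ref{proposition2} is the statement for $a=a_0-1$ assuming the claim for $a_0$) together with an inner forward induction on $b$, and each step splits into cases according to whether the current span already meets the target dimension; when it does not, an inclusion--exclusion dimension count (using Lemma~\ref{dimlowerbound} and the outer hypothesis) shows that the next worker's null space $\mathcal{N}\left(\left(\overline{\mathbf{F}_{\mathcal{A}(i_{b_0+1})}}\right)^{T}\right)$ is not contained in the current span, after which Lemma~\ref{random} is invoked directly on that pair of subspaces. You instead run a single induction on $m=b-a$ anchored at the degenerate case $m=0$ (Lemma~\ref{dimlowerbound}), and funnel all of the randomness through one ``maximal-growth'' statement, proved in the quotient $\mathcal{S}/(\mathcal{S}\cap\mathcal{V})$; your two recursive calls, to $(a,b-1)$ and to $(a+1,b)$, play exactly the roles of the paper's case hypothesis and of its relabeled outer induction hypothesis (inequality (c)), so the combinatorial skeleton matches, but the probabilistic work is packaged differently. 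Notably, your quotient formulation is \emph{stronger} than what Lemma~\ref{random} literally provides, and this strength is needed: Lemma~\ref{random} gives that the $\frac{\mathsf{K}}{\mathsf{N}}$ random vectors are mutually independent and that each one individually avoids $\mathcal{T}$, which by itself does not imply they are independent \emph{modulo} $\mathcal{T}$ --- the property actually required to conclude that the span grows by the full $\frac{\mathsf{K}}{\mathsf{N}}$ (or up to $\dim(\mathcal{V}+\mathcal{S})$). Independence of the images in the quotient supplies precisely this, so your argument tightens the one step where the paper's appeal to Lemma~\ref{random} is looser than what is claimed; what the paper's route buys in exchange is that it never leaves the ambient space $\mathbb{F}_{\mathsf{q}}^{\mathsf{K}_{\mathrm{c}}}$ or introduces quotients, keeping every step an explicit intersection-dimension computation, at the price of the heavier two-level induction and case analysis. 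Your closing union-bound remark (finitely many events, each failing with probability $O(1/\mathsf{q})$, counts independent of $\mathsf{q}$) correctly handles the reuse of the same random vectors across branches, since the $\mathbf{v}$-blocks of distinct workers are independent and the new block is independent of the (random) subspace $\mathcal{V}$ it is tested against.
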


\emph{Step 1.} We first show that when $a=\mathsf{N}_{\mathrm{r}}-1$ and $b=\mathsf{N}_{\mathrm{r}}$, Proposition \ref{proposition1} is true, i.e., with high probability 
\begin{align} \label{step1} 
&\dim\mathrm{span}\left\{\mathbf{u}_{\mathcal{A}\left(i_1\right), 1}, \ldots, \mathbf{u}_{\mathcal{A}\left(i_{\mathsf{N}_{\mathrm{r}}-1}\right), \mathsf{K}_{\mathrm{c}}-\frac{\mathsf{K}}{\mathsf{N}}\left(\mathsf{N}_{\mathrm{r}}-1\right)}, 
\notag\right. 
\\&
\phantom{=\;\;}
\left.\;\;\;\;\;\;\;\;\;\;\;\;\;\;\;\;\;\;\;\;\;\;\;\;\;\;\;\;\mathbf{v}_{\mathcal{A}\left(i_{\mathsf{N}_{\mathrm{r}}}\right), 1}, \ldots, \mathbf{v}_{\mathcal{A}\left(i_{\mathsf{N}_{\mathrm{r}}}\right), \frac{\mathsf{K}}{\mathsf{N}}}\right\}=\mathsf{K}_{\mathrm{c}}. 
\end{align} 

By Lemma \ref{dimlowerbound}, when $l=\mathsf{N}_{\mathrm{r}}-1$, we have 
\begin{align} 
&\dim\mathrm{span}\left\{\mathbf{u}_{\mathcal{A}\left(i_1\right), 1}, \ldots, \mathbf{u}_{\mathcal{A}\left(i_{\mathsf{N}_{\mathrm{r}}-1}\right), \mathsf{K}_{\mathrm{c}}-\frac{\mathsf{K}}{\mathsf{N}}\left(\mathsf{N}_{\mathrm{r}}-1\right)}\right\} 
\geq\mathsf{K}_{\mathrm{c}}-\frac{\mathsf{K}}{\mathsf{N}}.  \label{nr-1}  
\end{align} 

If $\dim\mathrm{span}\left\{\mathbf{u}_{\mathcal{A}\left(i_1\right), 1}, \ldots, \mathbf{u}_{\mathcal{A}\left(i_{\mathsf{N}_{\mathrm{r}}-1}\right), \mathsf{K}_{\mathrm{c}}-\frac{\mathsf{K}}{\mathsf{N}}\left(\mathsf{N}_{\mathrm{r}}-1\right)}\right\}=\mathsf{K}_{\mathrm{c}}$, then apparently \eqref{step1} holds. 

If $\dim\mathrm{span}\left\{\mathbf{u}_{\mathcal{A}\left(i_1\right), 1}, \ldots, \mathbf{u}_{\mathcal{A}\left(i_{\mathsf{N}_{\mathrm{r}}-1}\right), \mathsf{K}_{\mathrm{c}}-\frac{\mathsf{K}}{\mathsf{N}}\left(\mathsf{N}_{\mathrm{r}}-1\right)}\right\}\in\left[\mathsf{K}_{\mathrm{c}}-\frac{\mathsf{K}}{\mathsf{N}}:\mathsf{K}_{\mathrm{c}}\right)$, it can be derived that 
\begin{align} 
&\dim\left(\mathrm{span}\left\{\mathbf{u}_{\mathcal{A}\left(i_{\mathsf{N}_{\mathrm{r}}}\right), 1}, \ldots, \mathbf{u}_{\mathcal{A}\left(i_{\mathsf{N}_{\mathrm{r}}}\right), \mathsf{K}_{\mathrm{c}}-\frac{\mathsf{K}}{\mathsf{N}}\left(\mathsf{N}_{\mathrm{r}}-1\right)}\right\}  
\notag\right. 
\\&
\phantom{=\;\;}
\left.\bigcap\;\mathrm{span}\left\{\mathbf{u}_{\mathcal{A}\left(i_1\right), 1}, \ldots, \mathbf{u}_{\mathcal{A}\left(i_{\mathsf{N}_{\mathrm{r}}-1}\right), \mathsf{K}_{\mathrm{c}}-\frac{\mathsf{K}}{\mathsf{N}}\left(\mathsf{N}_{\mathrm{r}}-1\right)}\right\}\right)  
\\&=\dim\mathrm{span}\left\{\mathbf{u}_{\mathcal{A}\left(i_{\mathsf{N}_{\mathrm{r}}}\right), 1}, \ldots, \mathbf{u}_{\mathcal{A}\left(i_{\mathsf{N}_{\mathrm{r}}}\right), \mathsf{K}_{\mathrm{c}}-\frac{\mathsf{K}}{\mathsf{N}}\left(\mathsf{N}_{\mathrm{r}}-1\right)}\right\} \nonumber 
\\&\;\;\;\;+\dim\mathrm{span}\left\{\mathbf{u}_{\mathcal{A}\left(i_1\right), 1}, \ldots, \mathbf{u}_{\mathcal{A}\left(i_{\mathsf{N}_{\mathrm{r}}-1}\right), \mathsf{K}_{\mathrm{c}}-\frac{\mathsf{K}}{\mathsf{N}}\left(\mathsf{N}_{\mathrm{r}}-1\right)}\right\} \nonumber 
\\&\;\;\;\;-\dim\mathrm{span}\left\{\mathbf{u}_{\mathcal{A}\left(i_1\right), 1}, \ldots, \mathbf{u}_{\mathcal{A}\left(i_{\mathsf{N}_{\mathrm{r}}}\right), \mathsf{K}_{\mathrm{c}}-\frac{\mathsf{K}}{\mathsf{N}}\left(\mathsf{N}_{\mathrm{r}}-1\right)}\right\}  
\\&\overset{(\text{a})}<\left(\mathsf{K}_{\mathrm{c}}-\frac{\mathsf{K}}{\mathsf{N}}\left(\mathsf{N}_{\mathrm{r}}-1\right)\right)+\mathsf{K}_{\mathrm{c}}-\mathsf{K}_{\mathrm{c}} 
\\&=\mathsf{K}_{\mathrm{c}}-\frac{\mathsf{K}}{\mathsf{N}}\left(\mathsf{N}_{\mathrm{r}}-1\right),  
\end{align}
where (a) is because when $l=\mathsf{N}_{\mathrm{r}}$, by Lemma \ref{dimlowerbound}, we have 
\begin{align} 
&\dim\mathrm{span}\left\{\mathbf{u}_{\mathcal{A}\left(i_1\right), 1}, \ldots, \mathbf{u}_{\mathcal{A}\left(i_{\mathsf{N}_{\mathrm{r}}}\right), \mathsf{K}_{\mathrm{c}}-\frac{\mathsf{K}}{\mathsf{N}}\left(\mathsf{N}_{\mathrm{r}}-1\right)}\right\} 
=\mathsf{K}_{\mathrm{c}}.    
\end{align} 

Thus, in this case, since it can be concluded that $\mathrm{span}\left\{\mathbf{u}_{\mathcal{A}\left(i_{\mathsf{N}_{\mathrm{r}}}\right), 1}, \ldots, \mathbf{u}_{\mathcal{A}\left(i_{\mathsf{N}_{\mathrm{r}}}\right), \mathsf{K}_{\mathrm{c}}-\frac{\mathsf{K}}{\mathsf{N}}\left(\mathsf{N}_{\mathrm{r}}-1\right)}\right\}\nsubseteq\mathrm{span}\left\{\mathbf{u}_{\mathcal{A}\left(i_1\right), 1}, \ldots, \mathbf{u}_{\mathcal{A}\left(i_{\mathsf{N}_{\mathrm{r}}-1}\right), \mathsf{K}_{\mathrm{c}}-\frac{\mathsf{K}}{\mathsf{N}}\left(\mathsf{N}_{\mathrm{r}}-1\right)}\right\}$, by Lemma \ref{random}, with high probability $\mathbf{v}_{\mathcal{A}\left(i_{\mathsf{N}_{\mathrm{r}}}\right), 1}, \ldots, \mathbf{v}_{\mathcal{A}\left(i_{\mathsf{N}_{\mathrm{r}}}\right), \frac{\mathsf{K}}{\mathsf{N}}}$ are linearly independent and none of them are in $\mathrm{span}\left\{\mathbf{u}_{\mathcal{A}\left(i_1\right), 1}, \ldots, \mathbf{u}_{\mathcal{A}\left(i_{\mathsf{N}_{\mathrm{r}}-1}\right), \mathsf{K}_{\mathrm{c}}-\frac{\mathsf{K}}{\mathsf{N}}\left(\mathsf{N}_{\mathrm{r}}-1\right)}\right\}$. Therefore, by \eqref{nr-1}, \eqref{step1} holds with high probability.  

\emph{Step 2.} Suppose Proposition \ref{proposition1} is true when $a=a_0$, for any $b\in\left(a_0:\mathsf{N}_{\mathrm{r}}\right]$. 

When $a=a_0-1$, we prove the following proposition by induction. 
\begin{Proposition} \label{proposition2}
For any \textnormal{$b\in\left[a_0:\mathsf{N}_{\mathrm{r}}\right]$} responding workers $\mathcal{A}\left(i_1\right), \ldots, \mathcal{A}\left(i_{b}\right)\in\mathcal{A}$, we have 
\textnormal{ 
\begin{align} \label{prop2} 
&\dim\mathrm{span}\left\{\mathbf{u}_{\mathcal{A}\left(i_1\right), 1}, \ldots, \mathbf{u}_{\mathcal{A}\left(i_{a_0-1}\right), \mathsf{K}_{\mathrm{c}}-\frac{\mathsf{K}}{\mathsf{N}}\left(\mathsf{N}_{\mathrm{r}}-1\right)}, 
\notag\right. 
\\&
\phantom{=\;\;}
\left.\;\;\;\;\;\;\;\mathbf{v}_{\mathcal{A}\left(i_{a_0}\right), 1}, \ldots, \mathbf{v}_{\mathcal{A}\left(i_b\right), \frac{\mathsf{K}}{\mathsf{N}}}\right\}\geq\mathsf{K}_{\mathrm{c}}-\frac{\mathsf{K}}{\mathsf{N}}\left(\mathsf{N}_{\mathrm{r}}-b\right).   
\end{align} 
} 
\end{Proposition}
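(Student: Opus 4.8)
The plan is to establish Proposition~\ref{proposition2} by an \emph{inner} induction on $b$ over $\left[a_0:\mathsf{N}_{\mathrm{r}}\right]$, using the \emph{outer} induction hypothesis (Proposition~\ref{proposition1} at $a=a_0$) to control the ambient span that appears in the inductive step. For brevity, let $\mathcal{V}_n:=\mathcal{N}\big(\left(\overline{\mathbf{F}_n}\right)^{T}\big)=\mathrm{span}\left\{\mathbf{u}_{n, 1}, \ldots, \mathbf{u}_{n, \mathsf{K}_{\mathrm{c}}-\frac{\mathsf{K}}{\mathsf{N}}\left(\mathsf{N}_{\mathrm{r}}-1\right)}\right\}$ be the subspace from which worker $n$ draws its $\frac{\mathsf{K}}{\mathsf{N}}$ random vectors $\mathbf{v}_{n, 1}, \ldots, \mathbf{v}_{n, \frac{\mathsf{K}}{\mathsf{N}}}$, and for $b\in\left[a_0-1:\mathsf{N}_{\mathrm{r}}\right]$ let $\mathcal{U}_b$ denote the span on the left-hand side of \eqref{prop2}, with the convention that for $b=a_0-1$ no $\mathbf{v}$-vectors are present, so $\mathcal{U}_{a_0-1}$ is the span of the full bases of $\mathcal{A}\left(i_1\right), \ldots, \mathcal{A}\left(i_{a_0-1}\right)$. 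The key tool is the following generic-dimension fact: since $\mathbf{v}_{n, 1}, \ldots, \mathbf{v}_{n, \frac{\mathsf{K}}{\mathsf{N}}}$ are drawn uniformly and independently from $\mathcal{V}_n$, applying Lemma~\ref{random} one vector at a time, enlarging the target subspace $\mathcal{T}$ to absorb the previously selected vectors and taking a union bound over the (at most $\frac{\mathsf{K}}{\mathsf{N}}$) selections, gives with high probability
\[
\dim\left(\mathcal{U}+\mathrm{span}\left\{\mathbf{v}_{n, 1}, \ldots, \mathbf{v}_{n, \frac{\mathsf{K}}{\mathsf{N}}}\right\}\right)=\min\left\{\dim\mathcal{U}+\frac{\mathsf{K}}{\mathsf{N}},\; \dim\left(\mathcal{U}+\mathcal{V}_n\right)\right\}
\]
for any fixed subspace $\mathcal{U}$. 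Hence it suffices, at each stage, to lower bound both arguments of the $\min$ by the target value $\mathsf{K}_{\mathrm{c}}-\frac{\mathsf{K}}{\mathsf{N}}\left(\mathsf{N}_{\mathrm{r}}-b\right)$.

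For the base case $b=a_0$, Lemma~\ref{dimlowerbound} with $l=a_0-1$ gives $\dim\mathcal{U}_{a_0-1}\geq\mathsf{K}_{\mathrm{c}}-\frac{\mathsf{K}}{\mathsf{N}}\left(\mathsf{N}_{\mathrm{r}}-a_0+1\right)$, so that $\dim\mathcal{U}_{a_0-1}+\frac{\mathsf{K}}{\mathsf{N}}\geq\mathsf{K}_{\mathrm{c}}-\frac{\mathsf{K}}{\mathsf{N}}\left(\mathsf{N}_{\mathrm{r}}-a_0\right)$; moreover $\mathcal{U}_{a_0-1}+\mathcal{V}_{\mathcal{A}\left(i_{a_0}\right)}$ is exactly the span of the full bases of $\mathcal{A}\left(i_1\right), \ldots, \mathcal{A}\left(i_{a_0}\right)$, whose dimension is at least $\mathsf{K}_{\mathrm{c}}-\frac{\mathsf{K}}{\mathsf{N}}\left(\mathsf{N}_{\mathrm{r}}-a_0\right)$ by Lemma~\ref{dimlowerbound} with $l=a_0$. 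Both arguments of the $\min$ meet the target, so the generic-dimension fact yields \eqref{prop2} for $b=a_0$.

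For the inductive step, assume \eqref{prop2} holds for some $b\in\left[a_0:\mathsf{N}_{\mathrm{r}}\right)$, i.e. $\dim\mathcal{U}_b\geq\mathsf{K}_{\mathrm{c}}-\frac{\mathsf{K}}{\mathsf{N}}\left(\mathsf{N}_{\mathrm{r}}-b\right)$, whence $\dim\mathcal{U}_b+\frac{\mathsf{K}}{\mathsf{N}}\geq\mathsf{K}_{\mathrm{c}}-\frac{\mathsf{K}}{\mathsf{N}}\left(\mathsf{N}_{\mathrm{r}}-b-1\right)$. It remains to bound $\dim\big(\mathcal{U}_b+\mathcal{V}_{\mathcal{A}\left(i_{b+1}\right)}\big)$. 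The crux is that this span equals the span of the full bases of $\mathcal{A}\left(i_1\right), \ldots, \mathcal{A}\left(i_{a_0-1}\right), \mathcal{A}\left(i_{b+1}\right)$ together with the $\mathbf{v}$-vectors of $\mathcal{A}\left(i_{a_0}\right), \ldots, \mathcal{A}\left(i_b\right)$; relabeling $\mathcal{A}\left(i_{b+1}\right)$ as the $a_0$-th full-basis worker, this is precisely a configuration covered by Proposition~\ref{proposition1} at $a=a_0$ with the $b+1$ workers $\mathcal{A}\left(i_1\right), \ldots, \mathcal{A}\left(i_{a_0-1}\right), \mathcal{A}\left(i_{b+1}\right), \mathcal{A}\left(i_{a_0}\right), \ldots, \mathcal{A}\left(i_b\right)$, where $b+1\in\left(a_0:\mathsf{N}_{\mathrm{r}}\right]$ since $b\geq a_0$. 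The outer induction hypothesis then gives $\dim\big(\mathcal{U}_b+\mathcal{V}_{\mathcal{A}\left(i_{b+1}\right)}\big)\geq\mathsf{K}_{\mathrm{c}}-\frac{\mathsf{K}}{\mathsf{N}}\left(\mathsf{N}_{\mathrm{r}}-b-1\right)$. Both arguments of the $\min$ now reach $\mathsf{K}_{\mathrm{c}}-\frac{\mathsf{K}}{\mathsf{N}}\left(\mathsf{N}_{\mathrm{r}}-b-1\right)$, establishing \eqref{prop2} for $b+1$ and completing the induction.

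The step I expect to be the main obstacle is the ambient bound on $\dim\big(\mathcal{U}_b+\mathcal{V}_{\mathcal{A}\left(i_{b+1}\right)}\big)$. Lemma~\ref{dimlowerbound} governs only spans of the deterministic bases $\mathbf{u}_{n, \cdot}$, whereas $\mathcal{U}_b$ already contains the random $\mathbf{v}$-vectors of workers $\mathcal{A}\left(i_{a_0}\right), \ldots, \mathcal{A}\left(i_b\right)$, so it cannot be invoked directly; recognizing that this mixed span is exactly the object controlled by the outer induction hypothesis, once the freshly added worker $\mathcal{A}\left(i_{b+1}\right)$ is reinterpreted as a full-basis worker and the tuple is reordered, is what makes the two inductions interlock. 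A secondary point requiring care is the generic-dimension identity itself: because the $\mathbf{v}$-vectors are random, one must condition on the previously drawn vectors and union bound the failure probabilities so that the ``with high probability'' qualifier survives across all $\mathcal{O}\left(\mathsf{N}_{\mathrm{r}}\frac{\mathsf{K}}{\mathsf{N}}\right)$ vector selections as $\mathsf{q}\to\infty$.
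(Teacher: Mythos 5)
Your proof follows essentially the same route as the paper's: an inner induction on $b$ whose base case rests on Lemma \ref{dimlowerbound} (with $l=a_0-1$ and $l=a_0$) and whose inductive step invokes the outer hypothesis (Proposition \ref{proposition1} at $a=a_0$, after relabeling $\mathcal{A}\left(i_{b+1}\right)$ as the $a_0$-th full-basis worker) to lower-bound $\dim\big(\mathcal{U}_b+\mathcal{V}_{\mathcal{A}\left(i_{b+1}\right)}\big)$, then uses the randomness of the $\mathbf{v}$-vectors to transfer that bound to $\mathcal{U}_{b+1}$. Your generic-dimension $\min$ formula is simply a cleaner packaging of the paper's two-case analysis (either the dimension already meets the target, or $\mathcal{V}_{\mathcal{A}\left(i_{b+1}\right)}\nsubseteq\mathcal{U}_b$ is deduced from the dimension formula and Lemma \ref{random} is applied), and it in fact treats the ``independence modulo $\mathcal{U}_b$'' issue more carefully than the paper's literal invocation of that lemma.
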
 

\emph{Step 2a.} We first show that when $b=a_0$, Proposition \ref{proposition2} is true, i.e., with high probability 
\begin{align} \label{step1_1} 
&\dim\mathrm{span}\left\{\mathbf{u}_{\mathcal{A}\left(i_1\right), 1}, \ldots, \mathbf{u}_{\mathcal{A}\left(i_{a_0-1}\right), \mathsf{K}_{\mathrm{c}}-\frac{\mathsf{K}}{\mathsf{N}}\left(\mathsf{N}_{\mathrm{r}}-1\right)}, 
\notag\right. 
\\&
\phantom{=\;\;}
\left.\;\;\;\mathbf{v}_{\mathcal{A}\left(i_{a_0}\right), 1}, \ldots, \mathbf{v}_{\mathcal{A}\left(i_{a_0}\right), \frac{\mathsf{K}}{\mathsf{N}}}\right\}\geq\mathsf{K}_{\mathrm{c}}-\frac{\mathsf{K}}{\mathsf{N}}\left(\mathsf{N}_{\mathrm{r}}-a_0\right). 
\end{align} 

By Lemma \ref{dimlowerbound}, when $l=a_0-1$, we have 
\begin{align} 
&\dim\mathrm{span}\left\{\mathbf{u}_{\mathcal{A}\left(i_1\right), 1}, \ldots, \mathbf{u}_{\mathcal{A}\left(i_{a_0-1}\right), \mathsf{K}_{\mathrm{c}}-\frac{\mathsf{K}}{\mathsf{N}}\left(\mathsf{N}_{\mathrm{r}}-1\right)}\right\} \nonumber 
\\&\geq\mathsf{K}_{\mathrm{c}}-\frac{\mathsf{K}}{\mathsf{N}}\left(\mathsf{N}_{\mathrm{r}}-a_0+1\right).  \label{a0-1}  
\end{align} 

If $\dim\mathrm{span}\left\{\mathbf{u}_{\mathcal{A}\left(i_1\right), 1}, \ldots, \mathbf{u}_{\mathcal{A}\left(i_{a_0-1}\right), \mathsf{K}_{\mathrm{c}}-\frac{\mathsf{K}}{\mathsf{N}}\left(\mathsf{N}_{\mathrm{r}}-1\right)}\right\}\geq\mathsf{K}_{\mathrm{c}}-\frac{\mathsf{K}}{\mathsf{N}}\left(\mathsf{N}_{\mathrm{r}}-a_0\right)$, then apparently \eqref{step1_1} holds. 

If $\dim\mathrm{span}\left\{\mathbf{u}_{\mathcal{A}\left(i_1\right), 1}, \ldots, \mathbf{u}_{\mathcal{A}\left(i_{a_0-1}\right), \mathsf{K}_{\mathrm{c}}-\frac{\mathsf{K}}{\mathsf{N}}\left(\mathsf{N}_{\mathrm{r}}-1\right)}\right\}\in\left[\mathsf{K}_{\mathrm{c}}-\frac{\mathsf{K}}{\mathsf{N}}\left(\mathsf{N}_{\mathrm{r}}-a_0+1\right):\mathsf{K}_{\mathrm{c}}-\frac{\mathsf{K}}{\mathsf{N}}\left(\mathsf{N}_{\mathrm{r}}-a_0\right)\right)$, it can be derived that 
\begin{align} 
&\dim\left(\mathrm{span}\left\{\mathbf{u}_{\mathcal{A}\left(i_{a_0}\right), 1}, \ldots, \mathbf{u}_{\mathcal{A}\left(i_{a_0}\right), \mathsf{K}_{\mathrm{c}}-\frac{\mathsf{K}}{\mathsf{N}}\left(\mathsf{N}_{\mathrm{r}}-1\right)}\right\}  
\notag\right. 
\\&
\phantom{=\;\;}
\left.\bigcap\;\mathrm{span}\left\{\mathbf{u}_{\mathcal{A}\left(i_1\right), 1}, \ldots, \mathbf{u}_{\mathcal{A}\left(i_{{a_0}-1}\right), \mathsf{K}_{\mathrm{c}}-\frac{\mathsf{K}}{\mathsf{N}}\left(\mathsf{N}_{\mathrm{r}}-1\right)}\right\}\right)  
\\&=\dim\mathrm{span}\left\{\mathbf{u}_{\mathcal{A}\left(i_{a_0}\right), 1}, \ldots, \mathbf{u}_{\mathcal{A}\left(i_{a_0}\right), \mathsf{K}_{\mathrm{c}}-\frac{\mathsf{K}}{\mathsf{N}}\left(\mathsf{N}_{\mathrm{r}}-1\right)}\right\} \nonumber 
\\&\;\;\;\;+\dim\mathrm{span}\left\{\mathbf{u}_{\mathcal{A}\left(i_1\right), 1}, \ldots, \mathbf{u}_{\mathcal{A}\left(i_{a_0-1}\right), \mathsf{K}_{\mathrm{c}}-\frac{\mathsf{K}}{\mathsf{N}}\left(\mathsf{N}_{\mathrm{r}}-1\right)}\right\} \nonumber 
\\&\;\;\;\;-\dim\mathrm{span}\left\{\mathbf{u}_{\mathcal{A}\left(i_1\right), 1}, \ldots, \mathbf{u}_{\mathcal{A}\left(i_{a_0}\right), \mathsf{K}_{\mathrm{c}}-\frac{\mathsf{K}}{\mathsf{N}}\left(\mathsf{N}_{\mathrm{r}}-1\right)}\right\}  
\\&\overset{(\text{b})}<\left(\mathsf{K}_{\mathrm{c}}-\frac{\mathsf{K}}{\mathsf{N}}\left(\mathsf{N}_{\mathrm{r}}-1\right)\right)+\left(\mathsf{K}_{\mathrm{c}}-\frac{\mathsf{K}}{\mathsf{N}}\left(\mathsf{N}_{\mathrm{r}}-a_0\right)\right) \nonumber 
\\&\;\;\;\;-\left(\mathsf{K}_{\mathrm{c}}-\frac{\mathsf{K}}{\mathsf{N}}\left(\mathsf{N}_{\mathrm{r}}-a_0\right)\right)   
\\&=\mathsf{K}_{\mathrm{c}}-\frac{\mathsf{K}}{\mathsf{N}}\left(\mathsf{N}_{\mathrm{r}}-1\right),  
\end{align}
where (b) is because when $l=a_0$, by Lemma \ref{dimlowerbound}, we have 
\begin{align} 
&\dim\mathrm{span}\left\{\mathbf{u}_{\mathcal{A}\left(i_1\right), 1}, \ldots, \mathbf{u}_{\mathcal{A}\left(i_{a_0}\right), \mathsf{K}_{\mathrm{c}}-\frac{\mathsf{K}}{\mathsf{N}}\left(\mathsf{N}_{\mathrm{r}}-1\right)}\right\} \nonumber 
\\&\geq\mathsf{K}_{\mathrm{c}}-\frac{\mathsf{K}}{\mathsf{N}}\left(\mathsf{N}_{\mathrm{r}}-a_0\right).    
\end{align} 

Thus, in this case, since it can be concluded that $\mathrm{span}\left\{\mathbf{u}_{\mathcal{A}\left(i_{a_0}\right), 1}, \ldots, \mathbf{u}_{\mathcal{A}\left(i_{a_0}\right), \mathsf{K}_{\mathrm{c}}-\frac{\mathsf{K}}{\mathsf{N}}\left(\mathsf{N}_{\mathrm{r}}-1\right)}\right\}\nsubseteq\mathrm{span}\left\{\mathbf{u}_{\mathcal{A}\left(i_1\right), 1}, \ldots, \mathbf{u}_{\mathcal{A}\left(i_{a_0-1}\right), \mathsf{K}_{\mathrm{c}}-\frac{\mathsf{K}}{\mathsf{N}}\left(\mathsf{N}_{\mathrm{r}}-1\right)}\right\}$, by Lemma \ref{random}, with high probability $\mathbf{v}_{\mathcal{A}\left(i_{a_0}\right), 1}, \ldots, \mathbf{v}_{\mathcal{A}\left(i_{a_0}\right), \frac{\mathsf{K}}{\mathsf{N}}}$ are linearly independent and none of them are in $\mathrm{span}\left\{\mathbf{u}_{\mathcal{A}\left(i_1\right), 1}, \ldots, \mathbf{u}_{\mathcal{A}\left(i_{a_0-1}\right), \mathsf{K}_{\mathrm{c}}-\frac{\mathsf{K}}{\mathsf{N}}\left(\mathsf{N}_{\mathrm{r}}-1\right)}\right\}$. Therefore, by \eqref{a0-1}, \eqref{step1_1} holds with high probability.  

\emph{Step 2b.} Suppose Proposition \ref{proposition2} is true when $b=b_0$, i.e., with high probability 
\begin{align} \label{suppose} 
&\dim\mathrm{span}\left\{\mathbf{u}_{\mathcal{A}\left(i_1\right), 1}, \ldots, \mathbf{u}_{\mathcal{A}\left(i_{a_0-1}\right), \mathsf{K}_{\mathrm{c}}-\frac{\mathsf{K}}{\mathsf{N}}\left(\mathsf{N}_{\mathrm{r}}-1\right)}, 
\notag\right. 
\\&
\phantom{=\;\;}
\left.\;\;\;\;\mathbf{v}_{\mathcal{A}\left(i_{a_0}\right), 1}, \ldots, \mathbf{v}_{\mathcal{A}\left(i_{b_0}\right), \frac{\mathsf{K}}{\mathsf{N}}}\right\}\geq\mathsf{K}_{\mathrm{c}}-\frac{\mathsf{K}}{\mathsf{N}}\left(\mathsf{N}_{\mathrm{r}}-b_0\right).   
\end{align} 
  
When $b=b_0+1$, if 
\begin{align}
&\dim\mathrm{span}\left\{\mathbf{u}_{\mathcal{A}\left(i_1\right), 1}, \ldots, \mathbf{u}_{\mathcal{A}\left(i_{a_0-1}\right), \mathsf{K}_{\mathrm{c}}-\frac{\mathsf{K}}{\mathsf{N}}\left(\mathsf{N}_{\mathrm{r}}-1\right)}, 
\notag\right. 
\\&
\phantom{=\;\;}
\left.\!\!\!\!\mathbf{v}_{\mathcal{A}\left(i_{a_0}\right), 1}, \ldots, \mathbf{v}_{\mathcal{A}\left(i_{b_0}\right), \frac{\mathsf{K}}{\mathsf{N}}}\right\}\geq\mathsf{K}_{\mathrm{c}}-\frac{\mathsf{K}}{\mathsf{N}}\left(\mathsf{N}_{\mathrm{r}}-b_0-1\right),
\end{align}
then apparently 
\begin{align}
&\dim\mathrm{span}\left\{\mathbf{u}_{\mathcal{A}\left(i_1\right), 1}, \ldots, \mathbf{u}_{\mathcal{A}\left(i_{a_0-1}\right), \mathsf{K}_{\mathrm{c}}-\frac{\mathsf{K}}{\mathsf{N}}\left(\mathsf{N}_{\mathrm{r}}-1\right)}, 
\notag\right. 
\\&
\phantom{=\;\;}
\left.\!\!\!\!\!\!\!\!\!\mathbf{v}_{\mathcal{A}\left(i_{a_0}\right), 1}, \ldots, \mathbf{v}_{\mathcal{A}\left(i_{b_0+1}\right), \frac{\mathsf{K}}{\mathsf{N}}}\right\}\geq\mathsf{K}_{\mathrm{c}}-\frac{\mathsf{K}}{\mathsf{N}}\left(\mathsf{N}_{\mathrm{r}}-b_0-1\right). 
\end{align} 

Moreover, if 
\begin{align}
&\dim\mathrm{span}\left\{\mathbf{u}_{\mathcal{A}\left(i_1\right), 1}, \ldots, \mathbf{u}_{\mathcal{A}\left(i_{a_0-1}\right), \mathsf{K}_{\mathrm{c}}-\frac{\mathsf{K}}{\mathsf{N}}\left(\mathsf{N}_{\mathrm{r}}-1\right)}, \mathbf{v}_{\mathcal{A}\left(i_{a_0}\right), 1}, \ldots, 
\notag\right. 
\\&
\phantom{=\;\;}
\left.\!\!\!\!\!\!\!\!\!\mathbf{v}_{\mathcal{A}\left(i_{b_0}\right), \frac{\mathsf{K}}{\mathsf{N}}}\right\}\in\left[\mathsf{K}_{\mathrm{c}}-\frac{\mathsf{K}}{\mathsf{N}}\left(\mathsf{N}_{\mathrm{r}}-b_0\right):\mathsf{K}_{\mathrm{c}}-\frac{\mathsf{K}}{\mathsf{N}}\left(\mathsf{N}_{\mathrm{r}}-b_0-1\right)\right),  
\end{align} 
it can be derived that 
\begin{align} 
&\dim\left(\mathrm{span}\Big\{\mathbf{u}_{\mathcal{A}\left(i_1\right), 1}, \ldots, \mathbf{u}_{\mathcal{A}\left(i_{a_0-1}\right), \mathsf{K}_{\mathrm{c}}-\frac{\mathsf{K}}{\mathsf{N}}\left(\mathsf{N}_{\mathrm{r}}-1\right)}, 
\right. 
\\&
\left. 
\;\;\;\;\;\;\;\;\;\;\;\;\;\;\;\;\;\;\;\;\;\;\;\;\;\;\;\;\;\;\;\;\;\;\;\;\;\;\;\;\;\;\mathbf{v}_{\mathcal{A}\left(i_{a_0}\right), 1}, \ldots, \mathbf{v}_{\mathcal{A}\left(i_{b_0}\right), \frac{\mathsf{K}}{\mathsf{N}}}\Big\}  
\notag\right. 
\\&
\phantom{=\;\;}
\left. \bigcap\;\mathrm{span}\left\{\mathbf{u}_{\mathcal{A}\left(i_{b_0+1}\right), 1}, \ldots, \mathbf{u}_{\mathcal{A}\left(i_{b_0+1}\right), \mathsf{K}_{\mathrm{c}}-\frac{\mathsf{K}}{\mathsf{N}}\left(\mathsf{N}_{\mathrm{r}}-1\right)}\right\}\right)  
\\&=\dim\mathrm{span}\left\{\mathbf{u}_{\mathcal{A}\left(i_1\right), 1}, \ldots, \mathbf{u}_{\mathcal{A}\left(i_{a_0-1}\right), \mathsf{K}_{\mathrm{c}}-\frac{\mathsf{K}}{\mathsf{N}}\left(\mathsf{N}_{\mathrm{r}}-1\right)}, 
\notag\right. 
\\&
\phantom{=\;\;}
\left.\;\;\;\;\;\;\;\;\;\;\;\;\;\;\;\;\;\;\;\;\;\;\;\;\;\;\;\;\;\;\;\;\;\;\;\;\;\mathbf{v}_{\mathcal{A}\left(i_{a_0}\right), 1}, \ldots, \mathbf{v}_{\mathcal{A}\left(i_{b_0}\right), \frac{\mathsf{K}}{\mathsf{N}}}\right\} \nonumber 
\\&\;\;\;\;+\dim\mathrm{span}\left\{\mathbf{u}_{\mathcal{A}\left(i_{b_0+1}\right), 1}, \ldots, \mathbf{u}_{\mathcal{A}\left(i_{b_0+1}\right), \mathsf{K}_{\mathrm{c}}-\frac{\mathsf{K}}{\mathsf{N}}\left(\mathsf{N}_{\mathrm{r}}-1\right)}\right\} \nonumber 
\\&\;\;\;\;-\dim\mathrm{span}\left\{\mathbf{u}_{\mathcal{A}\left(i_1\right), 1}, \ldots, \mathbf{u}_{\mathcal{A}\left(i_{a_0-1}\right), \mathsf{K}_{\mathrm{c}}-\frac{\mathsf{K}}{\mathsf{N}}\left(\mathsf{N}_{\mathrm{r}}-1\right)}, 
\notag\right. 
\\&
\phantom{=\;\;}
\left.\;\;\;\;\;\;\;\;\;\;\;\;\;\;\;\mathbf{v}_{\mathcal{A}\left(i_{a_0}\right), 1}, \ldots, \mathbf{v}_{\mathcal{A}\left(i_{b_0}\right), \frac{\mathsf{K}}{\mathsf{N}}}, 
\notag\right. 
\\&
\phantom{=\;\;}
\left.\;\;\;\;\;\;\;\;\;\;\;\;\;\;\;\mathbf{u}_{\mathcal{A}\left(i_{b_0+1}\right), 1}, \ldots, \mathbf{u}_{\mathcal{A}\left(i_{b_0+1}\right), \mathsf{K}_{\mathrm{c}}-\frac{\mathsf{K}}{\mathsf{N}}\left(\mathsf{N}_{\mathrm{r}}-1\right)}\right\}  
\\&\overset{(\text{c})}<\left(\mathsf{K}_{\mathrm{c}}-\frac{\mathsf{K}}{\mathsf{N}}\left(\mathsf{N}_{\mathrm{r}}-b_0-1\right)\right)+\left(\mathsf{K}_{\mathrm{c}}-\frac{\mathsf{K}}{\mathsf{N}}\left(\mathsf{N}_{\mathrm{r}}-1\right)\right) \nonumber 
\\&\;\;\;\;-\left(\mathsf{K}_{\mathrm{c}}-\frac{\mathsf{K}}{\mathsf{N}}\left(\mathsf{N}_{\mathrm{r}}-b_0-1\right)\right)   
\\&=\mathsf{K}_{\mathrm{c}}-\frac{\mathsf{K}}{\mathsf{N}}\left(\mathsf{N}_{\mathrm{r}}-1\right),  
\end{align}
where (c) is because we have supposed that with high probability  
\begin{align} 
&\dim\mathrm{span}\left\{\mathbf{u}_{\mathcal{A}\left(i_1\right), 1}, \ldots, \mathbf{u}_{\mathcal{A}\left(i_{a_0-1}\right), \mathsf{K}_{\mathrm{c}}-\frac{\mathsf{K}}{\mathsf{N}}\left(\mathsf{N}_{\mathrm{r}}-1\right)}, 
\notag\right. 
\\&
\phantom{=\;\;}
\left.\;\;\;\;\;\;\;\;\;\;\;\;\;\;\;\;\;\mathbf{v}_{\mathcal{A}\left(i_{a_0}\right), 1}, \ldots, \mathbf{v}_{\mathcal{A}\left(i_{b_0}\right), \frac{\mathsf{K}}{\mathsf{N}}}, 
\notag\right. 
\\&
\phantom{=\;\;}
\left.\;\;\;\;\;\;\;\;\;\;\;\;\;\;\;\;\;\mathbf{u}_{\mathcal{A}\left(i_{b_0+1}\right), 1}, \ldots, \mathbf{u}_{\mathcal{A}\left(i_{b_0+1}\right), \mathsf{K}_{\mathrm{c}}-\frac{\mathsf{K}}{\mathsf{N}}\left(\mathsf{N}_{\mathrm{r}}-1\right)}\right\} \nonumber 
\\&\geq\mathsf{K}_{\mathrm{c}}-\frac{\mathsf{K}}{\mathsf{N}}\left(\mathsf{N}_{\mathrm{r}}-b_0-1\right).    
\end{align}  

Thus, in this case, since it can be concluded that with high probability $\mathrm{span}\left\{\mathbf{u}_{\mathcal{A}\left(i_{b_0+1}\right), 1}, \ldots, \mathbf{u}_{\mathcal{A}\left(i_{b_0+1}\right), \mathsf{K}_{\mathrm{c}}-\frac{\mathsf{K}}{\mathsf{N}}\left(\mathsf{N}_{\mathrm{r}}-1\right)}\right\}\nsubseteq\mathrm{span}\left\{\mathbf{u}_{\mathcal{A}\left(i_1\right), 1}, \ldots, \mathbf{u}_{\mathcal{A}\left(i_{a_0-1}\right), \mathsf{K}_{\mathrm{c}}-\frac{\mathsf{K}}{\mathsf{N}}\left(\mathsf{N}_{\mathrm{r}}-1\right)}, \mathbf{v}_{\mathcal{A}\left(i_{a_0}\right), 1}, \ldots, 
\notag\right. 
\\ 
\phantom{=\;\;}
\left.\!\!\!\!\!\!\!\!\!\!\mathbf{v}_{\mathcal{A}\left(i_{b_0}\right), \frac{\mathsf{K}}{\mathsf{N}}}\right\}$, by Lemma \ref{random}, with high probability $\mathbf{v}_{\mathcal{A}\left(i_{b_0+1}\right), 1}, \ldots, \mathbf{v}_{\mathcal{A}\left(i_{b_0+1}\right), \frac{\mathsf{K}}{\mathsf{N}}}$ are linearly independent and none of them are in $\mathrm{span}\left\{\mathbf{u}_{\mathcal{A}\left(i_1\right), 1}, \ldots, \mathbf{u}_{\mathcal{A}\left(i_{a_0-1}\right), \mathsf{K}_{\mathrm{c}}-\frac{\mathsf{K}}{\mathsf{N}}\left(\mathsf{N}_{\mathrm{r}}-1\right)}, \mathbf{v}_{\mathcal{A}\left(i_{a_0}\right), 1}, \ldots, 
\notag\right. 
\\ 
\phantom{=\;\;}
\left.\!\!\!\!\!\!\!\!\!\!\mathbf{v}_{\mathcal{A}\left(i_{b_0}\right), \frac{\mathsf{K}}{\mathsf{N}}}\right\}$. Therefore, by \eqref{suppose}, with high probability 
\begin{align}
&\dim\mathrm{span}\left\{\mathbf{u}_{\mathcal{A}\left(i_1\right), 1}, \ldots, \mathbf{u}_{\mathcal{A}\left(i_{a_0-1}\right), \mathsf{K}_{\mathrm{c}}-\frac{\mathsf{K}}{\mathsf{N}}\left(\mathsf{N}_{\mathrm{r}}-1\right)}, 
\notag\right. 
\\&
\phantom{=\;\;}
\left.\!\!\!\!\!\!\!\!\!\mathbf{v}_{\mathcal{A}\left(i_{a_0}\right), 1}, \ldots, \mathbf{v}_{\mathcal{A}\left(i_{b_0+1}\right), \frac{\mathsf{K}}{\mathsf{N}}}\right\}\geq\mathsf{K}_{\mathrm{c}}-\frac{\mathsf{K}}{\mathsf{N}}\left(\mathsf{N}_{\mathrm{r}}-b_0-1\right). 
\end{align}   

\emph{Step 2c.} Therefore, for any $b\in\left[a_0:\mathsf{N}_{\mathrm{r}}\right]$, with high probability we have 
\begin{align} 
&\dim\mathrm{span}\left\{\mathbf{u}_{\mathcal{A}\left(i_1\right), 1}, \ldots, \mathbf{u}_{\mathcal{A}\left(i_{a_0-1}\right), \mathsf{K}_{\mathrm{c}}-\frac{\mathsf{K}}{\mathsf{N}}\left(\mathsf{N}_{\mathrm{r}}-1\right)}, 
\notag\right. 
\\&
\phantom{=\;\;}
\left.\;\;\;\;\;\;\;\mathbf{v}_{\mathcal{A}\left(i_{a_0}\right), 1}, \ldots, \mathbf{v}_{\mathcal{A}\left(i_b\right), \frac{\mathsf{K}}{\mathsf{N}}}\right\}\geq\mathsf{K}_{\mathrm{c}}-\frac{\mathsf{K}}{\mathsf{N}}\left(\mathsf{N}_{\mathrm{r}}-b\right),    
\end{align} 
which proves Proposition \ref{proposition2}. 

\emph{Step 3.} Thus, for any $a\in\left[\mathsf{N}_{\mathrm{r}}-1\right]$ and any $b\in\left(a:\mathsf{N}_{\mathrm{r}}\right]$, with high probability we have 
\begin{align} 
&\dim\mathrm{span}\left\{\mathbf{u}_{\mathcal{A}\left(i_1\right), 1}, \ldots, \mathbf{u}_{\mathcal{A}\left(i_a\right), \mathsf{K}_{\mathrm{c}}-\frac{\mathsf{K}}{\mathsf{N}}\left(\mathsf{N}_{\mathrm{r}}-1\right)}, 
\notag\right. 
\\&
\phantom{=\;\;}
\left.\;\;\;\;\;\;\mathbf{v}_{\mathcal{A}\left(i_{a+1}\right), 1}, \ldots, \mathbf{v}_{\mathcal{A}\left(i_b\right), \frac{\mathsf{K}}{\mathsf{N}}}\right\}\geq\mathsf{K}_{\mathrm{c}}-\frac{\mathsf{K}}{\mathsf{N}}\left(\mathsf{N}_{\mathrm{r}}-b\right),    
\end{align} 
which proves Proposition \ref{proposition1}. 

Then by Proposition \ref{proposition1}, when $a=1$ and $b=\mathsf{N}_{\mathrm{r}}$, we have 
\begin{align} 
&\dim\mathrm{span}\left\{\mathbf{u}_{\mathcal{A}\left(i_1\right), 1}, \ldots, \mathbf{u}_{\mathcal{A}\left(i_{1}\right), \mathsf{K}_{\mathrm{c}}-\frac{\mathsf{K}}{\mathsf{N}}\left(\mathsf{N}_{\mathrm{r}}-1\right)}, 
\notag\right. 
\\&
\phantom{=\;\;}
\left.\;\;\;\;\;\;\;\;\;\;\;\;\;\;\;\;\;\;\;\;\;\;\;\;\;\;\;\;\;\mathbf{v}_{\mathcal{A}\left(i_{2}\right), 1}, \ldots, \mathbf{v}_{\mathcal{A}\left(i_{\mathsf{N}_{\mathrm{r}}}\right), \frac{\mathsf{K}}{\mathsf{N}}}\right\}=\mathsf{K}_{\mathrm{c}},  
\end{align} 
which completes the proof of Lemma \ref{fullrank}. 

\section{Proof of Lemma \ref{random}} \label{randomproof} 
Let $d_1 := \dim\mathcal{S}$ and $d_2 := \dim \left(\mathcal{S} \cap \mathcal{T}\right)$. Since $\mathcal{S} \not\subseteq \mathcal{T}$, we have $\mathcal{S} \cap \mathcal{T} \subsetneq \mathcal{S}$, hence $d_2 < d_1$. We choose the vectors $\mathbf{p}_1, \dots, \mathbf{p}_s$ one at a time, and estimate the probability that the required conditions are satisfied inductively. For simplicity, we always denote $ \mathcal{V}_i = \mathrm{span} \left\{ \mathbf{p}_1, \dots, \mathbf{p}_i \right\}$ for each $i\in\left[s\right]$. 

\emph{Step 1.} It is clear that 
\begin{align} 
&		 \lvert \mathcal{S} \rvert = \mathsf{q}^{d_1} \quad \text{and} \quad \lvert \mathcal{S} \cap \mathcal{T} \rvert = \mathsf{q}^{d_2} \leq \mathsf{q}^{d_1-1}. 
\end{align} 
		Therefore the probability for a randomly chosen $\mathbf{p}_1 \in \mathcal{S}$ not to be in $\mathcal{T}$ is 
\begin{align} 
		 \mathrm{Pr}\left(\mathbf{p}_1 \notin \mathcal{T} \mid \mathbf{p}_1 \in \mathcal{S}\right) &= 1- \frac{\mathsf{q}^{d_2}}{\mathsf{q}^{d_1}} 
\\&\geq 1- \frac{\mathsf{q}^{d_1-1}}{\mathsf{q}^{d_1}} 
\\&= 1-\frac{1}{\mathsf{q}} 
\\&> 1-\frac{2}{\mathsf{q}}. 
\end{align} 

\emph{Step 2.} For the choice of each subsequent $\mathbf{p}_i \in \mathcal{S}$, to guarantee the linear independence of $\left\{ \mathbf{p}_1, \dots, \mathbf{p}_i \right\}$, we need to make sure that $\mathbf{p}_i$ is not in the linear span of the previously chosen $i-1$ vectors. Therefore we estimate the probably that $\mathbf{p}_i \notin \left( \mathcal{T} \cup \mathcal{V}_{i-1} \right)$. Since we have 
\begin{align} 
		 \lvert \mathcal{S} \cap \mathcal{T} \rvert = \mathsf{q}^{d_2} \leq \mathsf{q}^{d_1-1} \quad \text{and} \quad \lvert \mathcal{V}_{i-1} \rvert \leq \mathsf{q}^{i-1} \leq \mathsf{q}^{d_1-1}, 
\end{align} 
		it follows that 
\begin{align} 
		 \lvert \left( \mathcal{S} \cap \mathcal{T} \right) \cup \mathcal{V}_{i-1} \rvert \leq \lvert \mathcal{S} \cap \mathcal{T} \rvert + \lvert \mathcal{V}_{i-1} \rvert \leq 2 \mathsf{q}^{d_1-1}, 
\end{align} 
		therefore 
\begin{align} 
		 \mathrm{Pr}( \mathbf{p}_i \notin (\mathcal{T} \cup \mathcal{V}_{i-1}) \mid \mathbf{p}_i \in \mathcal{S} ) \geq 1 - \frac{2\mathsf{q}^{d_1-1}}{\mathsf{q}^{d_1}} = 1 - \frac{2}{\mathsf{q}}. 
\end{align} 
		
		\emph{Step 3.} To summarize, we choose the vectors $\mathbf{p}_1, \dots, \mathbf{p}_s \in \mathcal{S}$ in the listed order. For the choice of each $\mathbf{p}_i$, the probability for the required condition to hold is at least $1-\frac{2}{\mathsf{q}}$. Therefore the probability for all vectors to meet the required conditions is at least $\left( 1 - \frac{2}{\mathsf{q}} \right)^s$, which converges to $1$ as $\mathsf{q} \to \infty$. 

Hence, we prove Lemma \ref{random}.

\section{Proof of Lemma \ref{dimlowerbound}} \label{dimlowerboundproof}  
Note that proving $\dim\mathrm{span}\left\{\mathbf{u}_{\mathcal{A}\left(i_{1}\right), 1}, \ldots, \mathbf{u}_{\mathcal{A}\left(i_{l}\right), \mathsf{K}_{\mathrm{c}}-\frac{\mathsf{K}}{\mathsf{N}}\left(\mathsf{N}_{\mathrm{r}}-1\right)}\right\}\\ \geq\mathsf{K}_{\mathrm{c}}-\frac{\mathsf{K}}{\mathsf{N}}\left(\mathsf{N}_{\mathrm{r}}-l\right)$   
is equivalent to proving $\dim\left(\mathcal{C}\left(\overline{\mathbf{F}_{\mathcal{A}\left(i_1\right)}}\right)\cap\cdots\cap\mathcal{C}\left(\overline{\mathbf{F}_{\mathcal{A}\left(i_l\right)}}\right)\right)\leq\frac{\mathsf{K}}{\mathsf{N}}\left(\mathsf{N}_{\mathrm{r}}-l\right)$, which will be shown by contradiction. 

Assume that with high probability $\dim\left(\mathcal{C}\left(\overline{\mathbf{F}_{\mathcal{A}\left(i_1\right)}}\right)\cap\cdots\cap\mathcal{C}\left(\overline{\mathbf{F}_{\mathcal{A}\left(i_l\right)}}\right)\right)=h>\frac{\mathsf{K}}{\mathsf{N}}\left(\mathsf{N}_{\mathrm{r}}-l\right)$. Let $\overline{\mathbf{f}_{\mathcal{A}\left(i_j\right), 1}}, \ldots, \overline{\mathbf{f}_{\mathcal{A}\left(i_j\right), \frac{\mathsf{K}}{\mathsf{N}}\left(\mathsf{N}_{\mathrm{r}}-1\right)}}$ be the columns of $\overline{\mathbf{F}_{\mathcal{A}\left(i_j\right)}}$, for each $j\in\left[l\right]$. Thus $\mathcal{C}\left(\overline{\mathbf{F}_{\mathcal{A}\left(i_1\right)}}\right)\cap\cdots\cap\mathcal{C}\left(\overline{\mathbf{F}_{\mathcal{A}\left(i_l\right)}}\right)$ is spanned by $h$ linearly independent combinations of $\overline{\mathbf{f}_{\mathcal{A}\left(i_j\right), 1}}, \ldots, \overline{\mathbf{f}_{\mathcal{A}\left(i_j\right), \frac{\mathsf{K}}{\mathsf{N}}\left(\mathsf{N}_{\mathrm{r}}-1\right)}}$, for any $j\in\left[l\right]$. Specifically, we let  
\begin{align}
&\left[\overline{\mathbf{f}_{\mathcal{A}\left(i_1\right), 1}}, \ldots, \overline{\mathbf{f}_{\mathcal{A}\left(i_1\right), \frac{\mathsf{K}}{\mathsf{N}}\left(\mathsf{N}_{\mathrm{r}}-1\right)}}\right] 
\left[\boldsymbol{\alpha}_{\mathcal{A}\left(i_1\right), 1}, \ldots, \boldsymbol{\alpha}_{\mathcal{A}\left(i_1\right), h}\right]=\cdots \nonumber 
\\&=\left[\overline{\mathbf{f}_{\mathcal{A}\left(i_l\right), 1}}, \ldots, \overline{\mathbf{f}_{\mathcal{A}\left(i_l\right), \frac{\mathsf{K}}{\mathsf{N}}\left(\mathsf{N}_{\mathrm{r}}-1\right)}}\right] 
\left[\boldsymbol{\alpha}_{\mathcal{A}\left(i_l\right), 1}, \ldots, \boldsymbol{\alpha}_{\mathcal{A}\left(i_l\right), h}\right],  
\end{align} 
where $\boldsymbol{\alpha}_{\mathcal{A}\left(i_j\right), t}\in\mathbb{F}_{\mathsf{q}}^{\frac{\mathsf{K}}{\mathsf{N}}\left(\mathsf{N}_{\mathrm{r}}-1\right)\times1}$ for any $j\in\left[l\right]$ and $t\in\left[h\right]$, and $\boldsymbol{\alpha}_{\mathcal{A}\left(i_j\right), 1}, \ldots, \boldsymbol{\alpha}_{\mathcal{A}\left(i_j\right), h}$ are linearly independent, for any $j\in\left[l\right]$.  

Let $\mathcal{B}$ be any subset of $\left[\mathsf{K}_{\mathrm{c}}\right]$ such that $\lvert\mathcal{B}\rvert=\frac{\mathsf{K}}{\mathsf{N}}\mathsf{N}_{\mathrm{r}}$, denote $\overline{\mathbf{F}_{\mathcal{A}\left(i_j\right)}^{\left(1\right)}}$ as the sub-matrix of $\overline{\mathbf{F}_{\mathcal{A}\left(i_j\right)}}$ comprised of the rows of $\overline{\mathbf{F}_{\mathcal{A}\left(i_j\right)}}$ with indices in $\mathcal{B}$, $j=1, \ldots, l$, and let $\overline{\mathbf{f}_{\mathcal{A}\left(i_j\right), 1}^{\left(1\right)}}, \ldots, \overline{\mathbf{f}_{\mathcal{A}\left(i_j\right), \frac{\mathsf{K}}{\mathsf{N}}\left(\mathsf{N}_{\mathrm{r}}-1\right)}^{\left(1\right)}}$ be the columns of $\overline{\mathbf{F}_{\mathcal{A}\left(i_j\right)}^{\left(1\right)}}$, for each $j\in\left[l\right]$. Therefore, the following equality holds:   
\begin{align}
&\left[\overline{\mathbf{f}_{\mathcal{A}\left(i_1\right), 1}^{\left(1\right)}}, \ldots, \overline{\mathbf{f}_{\mathcal{A}\left(i_1\right), \frac{\mathsf{K}}{\mathsf{N}}\left(\mathsf{N}_{\mathrm{r}}-1\right)}^{\left(1\right)}}\right] 
\left[\boldsymbol{\alpha}_{\mathcal{A}\left(i_1\right), 1}, \ldots, \boldsymbol{\alpha}_{\mathcal{A}\left(i_1\right), h}\right]=\cdots \nonumber 
\\&=\left[\overline{\mathbf{f}_{\mathcal{A}\left(i_l\right), 1}^{\left(1\right)}}, \ldots, \overline{\mathbf{f}_{\mathcal{A}\left(i_l\right), \frac{\mathsf{K}}{\mathsf{N}}\left(\mathsf{N}_{\mathrm{r}}-1\right)}^{\left(1\right)}}\right] 
\left[\boldsymbol{\alpha}_{\mathcal{A}\left(i_l\right), 1}, \ldots, \boldsymbol{\alpha}_{\mathcal{A}\left(i_l\right), h}\right].   
\end{align} 
In other words,  $\mathcal{C}\left(\overline{\mathbf{F}_{\mathcal{A}\left(i_1\right)}^{\left(1\right)}}\right)\cap\cdots\cap\mathcal{C}\left(\overline{\mathbf{F}_{\mathcal{A}\left(i_l\right)}^{\left(1\right)}}\right)$ is spanned by at least $h$ linearly independent combinations of $\overline{\mathbf{f}_{\mathcal{A}\left(i_j\right), 1}^{\left(1\right)}}, \ldots, \overline{\mathbf{f}_{\mathcal{A}\left(i_j\right), \frac{\mathsf{K}}{\mathsf{N}}\left(\mathsf{N}_{\mathrm{r}}-1\right)}^{\left(1\right)}}$, for any $j\in\left[l\right]$, thus we have $\dim\left(\mathcal{C}\left(\overline{\mathbf{F}_{\mathcal{A}\left(i_1\right)}^{\left(1\right)}}\right)\cap\cdots\cap\mathcal{C}\left(\overline{\mathbf{F}_{\mathcal{A}\left(i_l\right)}^{\left(1\right)}}\right)\right)\geq h$. Let $\mathbf{u}_{\mathcal{A}\left(i_j\right), 1}^{\left(1\right)}, \ldots, \mathbf{u}_{\mathcal{A}\left(i_j\right), \frac{\mathsf{K}}{\mathsf{N}}}^{\left(1\right)}$ be a vector basis for $\mathcal{N}\left(\left(\overline{\mathbf{F}_{\mathcal{A}\left(i_j\right)}^{\left(1\right)}}\right)^{T}\right)$, for each $j\in\left[l\right]$, hence it can be derived that  
\begin{align} 
&\dim\mathrm{span}\left\{\mathbf{u}_{\mathcal{A}\left(i_1\right), 1}^{\left(1\right)}, \ldots, \mathbf{u}_{\mathcal{A}\left(i_l\right), \frac{\mathsf{K}}{\mathsf{N}}}^{\left(1\right)}\right\} 
\\&=\frac{\mathsf{K}}{\mathsf{N}}\mathsf{N}_{\mathrm{r}}-\dim\left(\mathcal{C}\left(\overline{\mathbf{F}_{\mathcal{A}\left(i_1\right)}^{\left(1\right)}}\right)\cap\cdots\cap\mathcal{C}\left(\overline{\mathbf{F}_{\mathcal{A}\left(i_l\right)}^{\left(1\right)}}\right)\right) 
\\&=\frac{\mathsf{K}}{\mathsf{N}}\mathsf{N}_{\mathrm{r}}-h 
\\&<\frac{\mathsf{K}}{\mathsf{N}}l. \label{contradiction} 
\end{align}

However, it can be inferred from \cite[Lemma 2]{m=1} that with high probability $\dim\mathrm{span}\left\{\mathbf{u}_{\mathcal{A}\left(i_1\right), 1}^{\left(1\right)}, \ldots, \mathbf{u}_{\mathcal{A}\left(i_l\right), \frac{\mathsf{K}}{\mathsf{N}}}^{\left(1\right)}\right\}$ is exactly $\frac{\mathsf{K}}{\mathsf{N}}l$, which contradicts with \eqref{contradiction}. 

Therefore, with high probability $\dim\left(\mathcal{C}\left(\overline{\mathbf{F}_{\mathcal{A}\left(i_1\right)}}\right)\cap\cdots\cap\mathcal{C}\left(\overline{\mathbf{F}_{\mathcal{A}\left(i_l\right)}}\right)\right)\leq\frac{\mathsf{K}}{\mathsf{N}}\left(\mathsf{N}_{\mathrm{r}}-l\right)$, and consequently, 
\begin{align}
&\dim\mathrm{span}\left\{\mathbf{u}_{\mathcal{A}\left(i_{1}\right), 1}, \ldots, \mathbf{u}_{\mathcal{A}\left(i_{l}\right), \mathsf{K}_{\mathrm{c}}-\frac{\mathsf{K}}{\mathsf{N}}\left(\mathsf{N}_{\mathrm{r}}-1\right)}\right\} \nonumber 
\\&\geq\mathsf{K}_{\mathrm{c}}-\frac{\mathsf{K}}{\mathsf{N}}\left(\mathsf{N}_{\mathrm{r}}-l\right).  
\end{align} 

Hence, we prove Lemma \ref{dimlowerbound}.

\end{document}